\newtheorem{condition}{Condition}
\newcommand{\cmark}{\ding{51}}%
\newcommand{\xmark}{\ding{55}}%
\pgfplotsset{compat=1.17}
\newtheorem{instance}{Example}
\renewcommand{\paragraph}[1]{\smallskip\noindent\textbf{#1}}
\newtheorem{Claim}{Claim}
\newtheorem{claim}{Claim}
\newtheorem*{CLAIM}{Claim}
\newtheorem{theorem}{Theorem}
\newtheorem*{Theorem}{Theorem}
\newtheorem{corollary}{Corollary}
\newtheorem{Corollary}{Corollary}
\newtheorem{definition}{Definition}
\title{Combinatorial Civic Crowdfunding with Budgeted Agents: Welfare Optimality at Equilibrium and Optimal Deviation
\thanks{To appear in the Proceedings of the Thirty-Seventh AAAI Conference on Artificial Intelligence \textbf{(AAAI '23)}. A preliminary version of this paper titled ``\textit{Welfare Optimal Combinatorial Civic Crowdfunding with Budgeted Agents}'' also appeared at \textbf{GAIW@AAMAS '22}.}
}
\author{
  Sankarshan Damle, Manisha Padala, Sujit Gujar \\
  Machine Learning Lab\\
  International Institute of Information Technology (IIIT), Hyderabad \\
  \texttt{\{sankarshan.damle, manisha.padala\}@research.iiit.ac.in} \\
  \texttt{sujit.gujar@iiit.ac.in}
}
\begin{document}
\maketitle

\begin{abstract}
Civic Crowdfunding (CC) uses the ``power of the crowd'' to garner contributions towards public projects. As these projects are non-excludable, agents may prefer to ``free-ride,''  resulting in the project not being funded. For single project CC, researchers propose to provide refunds to incentivize agents to contribute, thereby guaranteeing the project's funding. These funding guarantees are applicable only when agents have an unlimited budget. This work focuses on a combinatorial setting, where multiple projects are available for CC and agents have a limited budget. We study certain specific conditions where funding can be guaranteed. Further, funding the optimal social welfare subset of projects is desirable when every available project cannot be funded due to budget restrictions. We prove the impossibility of achieving optimal welfare at equilibrium for any monotone refund scheme. We then study different heuristics that the agents can use to contribute to the projects in practice. Through simulations, we demonstrate the heuristics' performance as the average-case trade-off between welfare obtained and agent utility.
\end{abstract}

\keywords{Civic Crowdfunding \and Nash Equilibrium \and Social Welfare}

\section{Introduction}
\label{sec:intro}

Local communities often find it beneficial to elicit contributions
from their members for \emph{public} good projects.
E.g., the construction of markets, playgrounds, 
and libraries, among others~\cite{london}. These goods provide 
the local community with social amenities, generating social welfare.
This process of {generating} funds from members towards 
community services is referred to as  \emph{Civic Crowdfunding} (CC). 
CC is instrumental in changing the interaction between local 
governments and communities. 
It empowers citizens by allowing participation in the design and planning of public good projects~\cite{van2021civic}. 
Such democratization of public projects has led CC to become an active area of research~\cite{diederich2016group,goodspeed2017community,chandra2017referral,wang2020mechanism,yan2021optimal,cason2021early}. Moreover, introduction of web-based CC platforms~\cite{wiki:kickstarter,spacehive} has added to its popularity.

As depicted in Figure~\ref{fig:my_motivation}, typically, multiple projects are simultaneously available for CC. We refer to CC for multiple projects as \emph{combinatorial} CC. Formally, CC comprises strategic agents who observe their valuations for the available public projects. Each project has a known target cost and deadline. 
Each agent contributes to the available projects as per its valuations and within its \emph{budget}.
The agent valuations are such that the overall sum is greater than the project's target cost, i.e., there is enough valuation (interest) for the project's funding. 
A project is \emph{funded} when the agents' total contribution meets the target cost within the deadline. When funded, each agent obtains a quasi-linear utility equivalent to its valuation for the project minus its contribution. In turn, the community generates social welfare -- the difference in the project's total valuation and cost.

\paragraph{Free-riding.} The primary challenge in CC is due to the non-excludability of the public projects. That is, the citizens can avail a project's benefit without contributing to its funding.
Consequently, strategic agents may free-ride and merely wait for others to fund the project.
When the majority decides to free-ride, the project remains unfunded despite sufficient interest in its funding~\cite{stroup2000free}. 
To persuade strategic agents to contribute, researchers propose to provide additional incentives to them in the form of \emph{refunds}.

\begin{figure}[!t]
    \centering
    \includegraphics[width=0.7\linewidth,trim={250pt 150pt 250pt 320pt},clip]{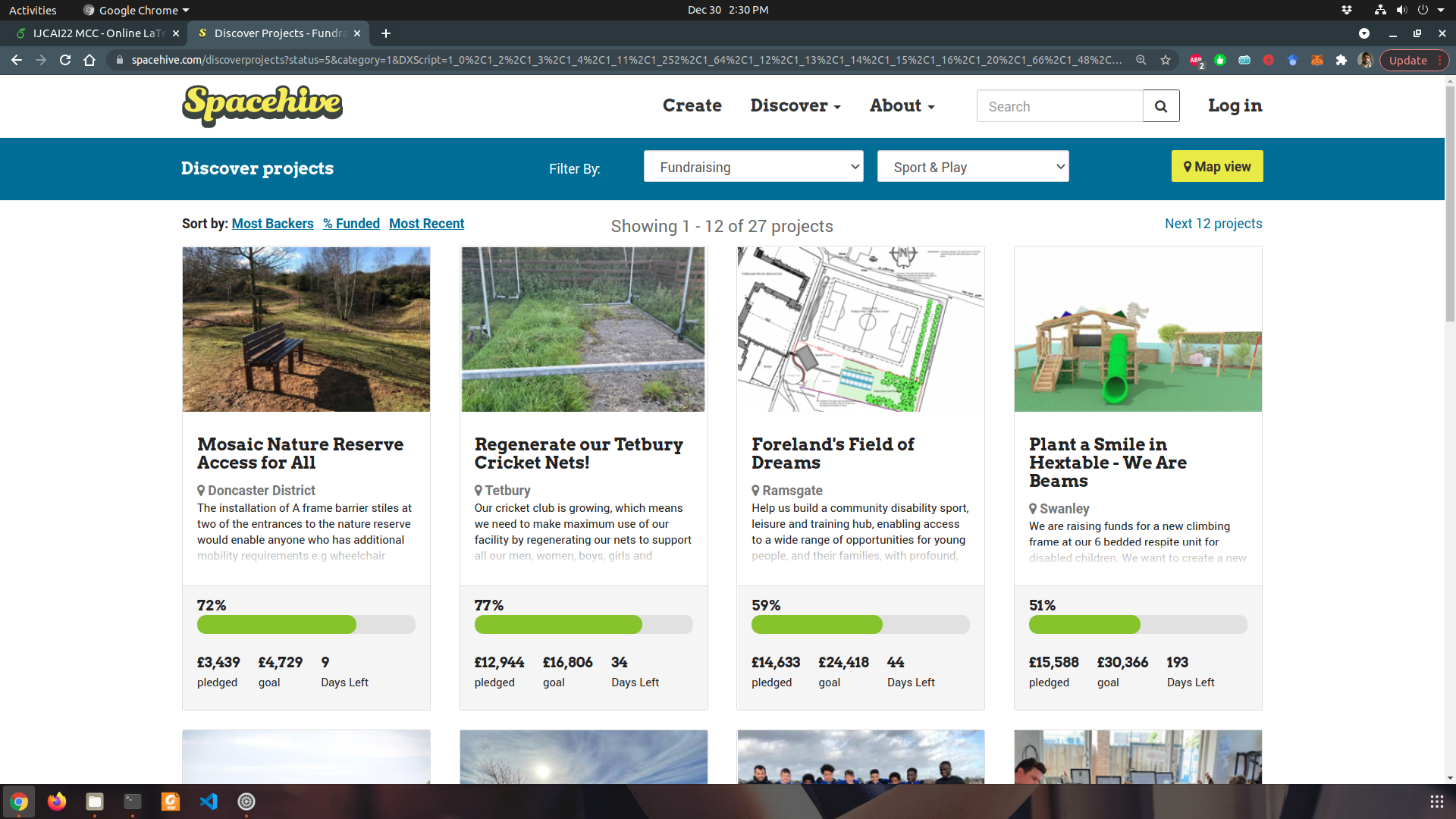}
    \caption{Example instance of Combinatorial Civic Crowdfunding (CC)~\cite{spacehive}. 
    }
    \label{fig:my_motivation}
\end{figure}
%

\paragraph{Refunds.} \citet{zubrickas2014provision} presents PPR, \textit{Provision Point Mechanism with Refunds}, which employs the first such refund scheme. PPR assumes that a \emph{central planner} keeps some refund budget aside. If the project is not funded, the planner
returns the agent's contribution and an additional refund proportional to the agent's contribution. The refund scheme incentivizes the agents to increase their contribution to obtain a greater refund. The characteristic of the resulting game is that the public project is funded at equilibrium. However, PPR and subsequent works (\cite{damlePricai} and references cited therein) assume that agents have an unlimited individual budget; in reality, the agents may have a limited budget. To this end, we aim to determine the funding guarantees for a subset of public projects that maximize social welfare within the available budget.

\subsection*{Our Approach and Contributions}
This paper lays the theoretical foundation for combinatorial CC with budgeted agents. Table \ref{tab:my_label} presents the overview of our results, described in detail next.

\smallskip\noindent\textbf{Budget Surplus (BS).} We first study the seemingly straightforward case of \emph{Budget Surplus}, i.e., the overall budget across the agents is more than the projects' total cost. For this, it is welfare optimal to fund all the projects. Despite the surplus budget, we show that the projects' funding cannot be guaranteed at equilibrium~(Theorem~\ref{thm::noneqm} and Corollary~\ref{cor::bs}).

\smallskip\noindent\textbf{Subset Feasibility (SF).} We observe that the budget distribution among the agents plays a significant role in deciding the funding status of the projects. Conditioning on the budget distribution, we introduce \emph{Subset Feasibility} of a given subset of projects. We prove that Subset Feasibility coupled with Budget Surplus guarantees funding of every available project at equilibrium (Theorem~\ref{thm:bf_bs}), thereby generating the maximum possible social welfare. 

\smallskip\noindent\textbf{Budget Deficit (BD).} 
Trivially, in the case of \emph{Budget Deficit} -- when there is no Budget Surplus -- one can only fund a subset of projects. It may be desirable that such a subset is welfare-maximizing within the budget. We refer to the funding of the socially welfare optimal subset at equilibrium as \emph{socially efficient equilibrium}. For this case, we present the following results.

First, we show that, in general, achieving socially efficient equilibrium is impossible for any refund scheme (Example~\ref{eg::one}). Next, we prove that even with the stronger assumption of Subset Feasibility, it is still impossible to achieve socially efficient equilibrium  (Theorem \ref{thm:bf_bd}). Specifically, we prove that strategic deviations may exist for agents such that the optimal welfare subset remains unfunded. We then show that it is NP-Hard for an agent to find its optimal deviation, given the contributions of all the other agents~(Theorem~\ref{thm::nphard} and Corollary~\ref{corr:nphard}). Due to Theorem~\ref{thm:bf_bd} and hardness of optimal deviation (Theorem~\ref{thm::nphard}), we construct five heuristics for the agent's contributions and empirically study their social welfare and agent utility through simulations (Section~\ref{sec:expt}).

%
\begin{table}[!t]
    \centering
    \adjustbox{max width=\columnwidth}{
    \begin{tabular}{lc}
    \toprule
    \textbf{Property}     &   \textbf{Socially Efficient Equilibrium}\\
    \midrule
     Budget Surplus       & \xmark ~(Corollary~\ref{cor::bs})  \\
     Budget Surplus + Subset Feasibility    &  \cmark ~(Theorem~\ref{thm:bf_bs})\\
     Budget Deficit + Subset Feasibility  & \xmark ~(Theorem~\ref{thm:bf_bd})\\
     \bottomrule
    \end{tabular}}
    \caption{Overview of our theoretical results. 
    }
    \label{tab:my_label}
\end{table}
%


\section{Related Work}
\label{sec:rw}
Several works study the effect of agents' contribution to public projects~\cite{wang2020mechanism,pp1,pp2,brandl2022funding}. One way of modelling agent contribution is using \emph{Cost Sharing Mechanisms} (CSMs) in CC~\cite{moulin1994serial}. More concretely, CSMs focus on sharing the cost among the strategic agents to ensure that an efficient set of projects are funded \cite{moulin1994serial,moulin2001,tim2018,shahar17,birmpas2019cost}. The authors in \cite{wang2020mechanism,ohseto2000} model CSMs for non-excludable public projects and provide agent contributions that ensure specific desirable properties, e.g., individual rationality and strategy-proofness. However, these works do not guarantee funding at equilibrium since agents are strategic and CSMs do not offer refunds.

In another line of work, funding of public projects is modeled as \emph{Participatory Budgeting} (PB) \cite{aziz2021participatory}. \citet{brandl2022funding} study a model without targets costs and without quasi-linear utilities, applicable for making donations to long-term projects. Generally, in the PB literature, the utility of an agent is determined by the number of projects funded or the costs of the projects~(e.g., \cite{azizadt21,sreedurga2022maxmin}), whereas in CC, it is the difference between the agent's valuation and contribution. \citet{aziz2022coordinating} consider a PB model with budget constraints and quasi-linear utilities when the project is funded. The authors prove the impossibility of finding an approximation to welfare optimal funded subset of projects which also ensure weak participation, i.e., positive utility to all agents. However, they do not assume strategic agents.

For excludable public projects, \citet{pp2} focus on effort allocation by strategic agents towards the project's completion. Contrarily, we focus on funding guarantees of non-excludable public projects with strategic agents.

\paragraph{CC with Refunds.} In the seminal work, \citet{bagnoli1989provision} present Provision Point Mechanism (PPM) for single project CC, without refunds. Consequently, PPM consists of several inefficient equilibria~\cite{bagnoli1989provision,healy2006learning}. Agents may also free-ride since the projects are non-excludable~\cite{stroup2000free}. To overcome such limitations, \citet{zubrickas2014provision} presents PPR, a novel mechanism that offers refunds proportional to contributions.
Based on the attractive properties of PPR, other works propose different refund schemes for different agent models and strategy space~\cite{damlePricai,chandra2016crowdfunding,damle2019ijcai,damle2019aggregating}. However, these works only focus on a single project with agents having unlimited budgets.

Among recent works, \citet{padala21} attempt to learn equilibrium contributions when agents have a limited budget in combinatorial CC using Reinforcement Learning. However, the work does not provide any funding guarantees -- welfare or otherwise -- for the projects. 
 
\citet{pp1} analyze the existence of cooperative Nash Equilibrium for funding a single public project using `external investments.' Their work considers agents to have binary contributions, unlike our setting, where agent contributions are in $\mathbb{R}_+$. Moreover, in their utility structure, agents receive a fixed fraction of the total contribution when the project is funded; otherwise, their contribution is returned. That is, they do not model agent valuations.  

We remark that while our work is motivated by the existing CC literature, it remains fundamentally different. To the best of our knowledge, we are the first to study the funding guarantees for combinatorial CC with budgeted agents. We also focus on an agent's equilibrium behavior and study the hardness of the optimal strategy for the agents. 

\section{Preliminaries}
\label{sec:prelim}
This section presents our CC model and important definitions. We also summarize PPR for the single project case.

\subsection{Combinatorial CC Model}
Let $P=\{1,\ldots,p\}$ be the set of projects to be crowdfunded with target costs $T=\{T_1,\ldots,T_p\}$. Let $N=\{1,\ldots,n\}$ denote the set of agents interested in contributing to all projects. We consider a limited budget for each agent $\gamma = (\gamma_1, \ldots, \gamma_n)$. Each agent $i$ has a private valuation for the project $j$, denoted by $\theta_{ij}\ge0$. We consider \textit{additive} valuations, i.e., an agent $i$ has a value of $\sum_{j\in M} \theta_{ij}$ for a funded subset $M\subseteq P$. Let $\vartheta_j = \sum_{i\in N}\theta_{ij}$ denote the total valuation in the system for the project $j$. An agent $i$ contributes $x_{ij}\in\mathbb{R}_+$ to project $j$, s.t., $\sum_{j\in P} x_{ij} \leq \gamma_i$. The total contribution towards a project $j$ is denoted by $C_j = \sum_{i\in N} x_{ij}$. The project is funded if $C_j \geq T_j$ by the deadline, and each agent gets the funded utility of $\theta_{ij} - x_{ij}$. 
If the project is unfunded $(C_j<T_j)$, the agents are returned their contributions $x_{ij}$ and in some mechanisms, additional refunds, as defined later.

\paragraph{Welfare Optimal.} Ideally, when there is limited budget, it may be desirable to fund welfare optimal subset defined as follows. Note that, the welfare obtained from project $j$ if funded is $\vartheta_j - T_j$ and zero otherwise~\cite{borgers2015introduction,chakrabarty2014welfare}\footnote{All the results presented in this paper also hold if $ P^\star \in \underset{M \subseteq P}{\mbox{arg max}}  \sum_{j\in M} \vartheta_j ~ \mbox{s.t.}~ \sum_{j\in M} T_j \leq \sum_{i\in N} \gamma_i. $ }. 

\begin{definition}[Welfare Optimal]
\label{def:wo}
A set of projects $P^\star\subseteq P$ is welfare optimal if it maximizes social welfare under the available budget, i.e.,
\begin{equation}
P^\star \in \underset{M \subseteq P}{\mbox{arg max}}  \sum_{j\in M} \left(\vartheta_j - T_j\right)~ \mbox{s.t.}~ \sum_{j\in M} T_j \leq \sum_{i\in N} \gamma_i. 
\end{equation}

\end{definition}
\noindent  We make the following observations based on Definition~\ref{def:wo}.
 
 \begin{itemize}[leftmargin=*,noitemsep]
     \item Finding $P^\star$ requires public knowledge of $\vartheta$s, $T$s and the value $\sum_{i\in N} \gamma_i$. Contrary to the PB or CSM literature, the aggregate valuation $\vartheta$ is assumed to be public knowledge in the CC literature~\cite{zubrickas2014provision,chandra2016crowdfunding}. Similarly, we also assume that the overall budget in the system $\sum_{i\in N} \gamma_i$ is public knowledge. This may be done by deriving the overall budget by aggregating citizen interest~\cite{alegre2020case,hudexchange}.
    \item Computing $P^\star$ is NP-Hard as it can be trivially reduced from the KNAPSACK problem. However, note that our primary results focus on $P^\star$'s funding guarantees at equilibrium (and not actually computing it). Moreover, computing $P^\star$ may also not be a deal breaker as the number of simultaneous projects available will not be arbitrarily large. One may also employ FPTAS~\cite{lawler1977fast}.
 \end{itemize}

\paragraph{Refund Scheme.} We define the refund scheme for each project $j\in P$ as  $R_j(B_j, x_{ij}, C_j): \mathbb{R}_{+}^{3} \rightarrow \mathbb{R}_+ $ s.t. $r_{ij}=R_j(B_j, x_{ij}, C_j)$ is agent $i$'s refund share for contributing $x_{ij}$ to project $j$.
The overall budget for the refund bonus $B_{j}>0$ is {public} knowledge. Typically, if a project is unfunded, the agents receive $r_{ij}$, and zero refund otherwise. The total refunds distributed for project $j$ can be such that $\sum_{i\in N} r_{ij}= B_j$ (e.g., \cite{zubrickas2014provision}) or $\sum_{i\in N} r_{ij}< B_j$ (e.g., \cite{chandra2016crowdfunding,damlePricai}). Throughout the paper, we assume that $\sum_{i\in N} r_{ij}= B_j~\forall j$.

The CC literature also assumes that $R$ is anonymous, i.e., refund share is independent of agent identity. Further, consider the following condition for a refund scheme, assuming $R$ is differentiable w.r.t. $x$.

\begin{condition}[Contribution Monotonicity (CM)~\cite{damlePricai}]\label{cond1}
A refund scheme $R(x;\cdot)$ satisfies Contribution Monotonicity (CM) if it is strictly monotonically increasing with respect to the contribution $x$, i.e.,
    $
    \frac{\partial R(x;\cdot)}{\partial x} > 0.
    $
\end{condition}

\subsection{Agent Utilities and Important Definitions} Let $\mathcal{M}_{CC}=\left\langle P,N,\gamma,T,(\vartheta_j)_{j\in P},(R_j)_{j\in P},(B_j)_{j\in P} \right\rangle$ define a general combinatorial CC game. In this, the overall agent utility can be defined as.

\begin{definition}[Agent Utility]\label{def:au}
Given an instance of $\mathcal{M}_{CC}$, with agents having valuations $[\theta_{ij}]$ and contributions $[x_{ij}]$, the utility of an agent $i$ for each project $j\in P$ is given by $\sigma_{ij}(\theta_{ij}, x_{ij}, r_{ij}, C_j, T_j):\mathbb{R}_+^5 \rightarrow \mathbb{R}$ 
\begin{equation*}
\sigma_{ij}(\cdot)= \mathbbm{1}_{C_j \geq T_j} \cdot \underbrace{(\theta_{ij} - x_{ij})}_\text{Funded utility $\sigma_{ij}^F$} + ~\mathbbm{1}_{C_j < T_j} \cdot \underbrace{r_{ij}}_\text{Unfunded utility $\sigma_{ij}^U$}
\end{equation*}
where $\mathbbm{1}_X$ is an indicator variable, such that $\mathbbm{1}_X=1$ if $X$ is true and zero otherwise.
\end{definition}

An agent $i$'s utility for project $j$ is either $\sigma_{ij}^F=\theta_{ij} - x_{ij}$ when $j$ is funded, and $\sigma_{ij}^U=r_{ij}$ otherwise. Let $U_i(\cdot)$ denote the total utility an agent $i$ derives, i.e., $U_i(\cdot)=\sum_{j\in P} \sigma_{ij}$. 
This incentive structure induces a game among the agents. 
As the agents are strategic, each agent aims to provide contributions that maximizes its utility. As such, we focus on contributions which follow pure strategy Nash equilibrium.

 \begin{definition}[Pure Strategy Nash Equilibrium (PSNE)]
 \label{def:psne}
A contribution profile $(x_{i1}^*,\ldots, x_{ip}^*)_{i \in N}$ is said to be Pure Strategy Nash equilibrium (PSNE) if, $\forall i \in N$,
$$ \sum_{j\in P} \sigma_{ij}(x_{ij}^*,x_{-ij}^*;\cdot) \geq \sum_{j\in P} \sigma_{ij}(x_{ij},x_{-ij}^*;\cdot),\  \ \forall x_{ij}.
$$
where $x_{-ij}^*$ is the contribution of all agents except agent $i$.
\end{definition}

\smallskip\noindent\textbf{Efficacy of PSNE Contributions.} PSNE is the standard choice of solution concept in CC literature~\cite{zubrickas2014provision,damlePricai,chandra2016crowdfunding,damle2019ijcai}. \citet{zubrickas2014provision} shows that for an appropriate refund bonus (see Eq.~\ref{eqn::equal}), their PSNE strategies are the unique equilibrium of the mechanism. Moreover, \citet{cason2017enhancing} empirically validate the effectiveness of these PSNE strategies using real-world experiments.

Given the contributions $[x_{ij}^*]$, we can compute the set of the projects that are funded and unfunded at equilibrium. Throughout the paper, we refer to the funding of $P^\star$ at equilibrium as \emph{socially efficient equilibrium}. We next define budget surplus.

\begin{definition}[Budget Surplus (BS)]\label{def:bs}
There is enough overall budget to fund each project $j\in P$, i.e.,
$\sum_{i\in N} \gamma_i \geq \sum_{j\in P} T_j$.
\end{definition}

We refer to the scenario $\sum_{i\in N} \gamma_i < \sum_{j\in P} T_j$ as Budget Deficit (BD). In CC literature, it is also natural to assume that $\vartheta_j>T_j, \forall j \in P$~\cite{zubrickas2014provision}. That is, there is sufficient interest in each available project's funding. Hence, when there is surplus budget, it is optimal to fund all the projects, i.e., $P^\star = P$.

Further, we assume that agents do not have any additional information about the funding of the public projects. This assumption implies that their belief towards the projects' funding is symmetric. This is a standard assumption in CC literature~\cite{zubrickas2014provision,damlePricai}.

\begin{figure}[t]
    \centering
    \includegraphics[width=0.5\columnwidth]{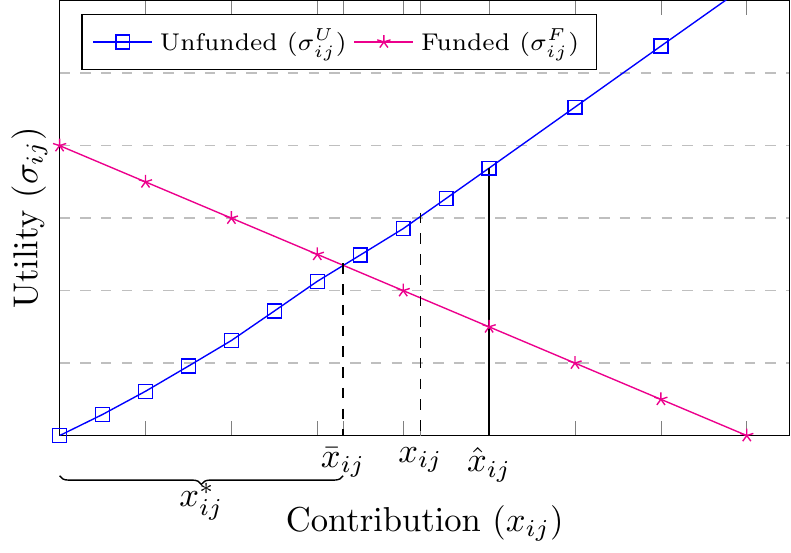}
    \caption{Utility vs. Contribution for agent $i$ for project $j$.}
    \label{fig:my_proof}
\end{figure}
%

\subsection{Single Project Civic Crowdfunding}

{\paragraph{Provision Point Mechanism with Refunds (PPR)}}. For single project CC, i.e., $P=\{1\}$,
\citet{zubrickas2014provision} proposes PPR which employs the following refund scheme $\forall i\in N$,
\begin{equation}\label{eqn::ppr}
 r^{PPR}_{i1}={R}^{PPR}(x_{i1},B_1,C_1)=\left(\frac{x_{i1}}{C_1}\right)B_1.   
\end{equation}
Each agent $i$'s equilibrium contributions are defined such that its funded utility is greater than or equal to its unfunded utility, i.e., $\theta_{i1} - x_{i1}^* \geq  r^{PPR}_{i1} $. We depict such a situation with Figure~\ref{fig:my_proof}. The author shows that the project is funded at equilibrium\footnote{In other words, the introduction of refunds results in the funding of the public project being the unique Nash equilibrium of the game when $\vartheta_1>T_1$. That is, ``bad'' Nash equilibria where the project is not funded are filtered out.} when $\vartheta_1>T_1$, and that it is PSNE for each agent $i$ to contribute $x_{i1}^*$ or the amount left to fund the project, whichever is minimum. More formally,

\begin{theorem}[\cite{zubrickas2014provision}]\label{thm:ppr} 
    In PPR, with $\vartheta_1>T_1$ and $B_1>0$, the set of PSNEs are $\{ x_{i1}^*~|~ x_{i1}^*\leq \bar{x}_{i1},~\forall i; C_1=T_1\}$ if $B_1\leq \vartheta_1 - T_1$, where $\bar{x}_{i1}=\frac{T_1}{B_1+T_1}\theta_{i1}$. Otherwise, the set is empty.
\end{theorem}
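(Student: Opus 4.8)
The plan is to characterize the equilibria by a per-agent best-response analysis, exploiting that agent $i$'s payoff has two regimes: a funded payoff $\theta_{i1}-x_{i1}$ that is strictly decreasing in its own contribution, and an unfunded refund $r^{PPR}_{i1}=(x_{i1}/C_1)B_1$ that is strictly increasing in its own contribution (whenever at least one other agent contributes). First I would rule out over-funding: if $C_1>T_1$, any agent with $x_{i1}>0$ can shave off a small $\epsilon$, keep the project funded, and strictly increase $\theta_{i1}-x_{i1}$; hence at any funded equilibrium the target binds exactly, so $C_1=T_1$.

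Second, I would derive the cap $\bar{x}_{i1}$ from the only binding deviation at $C_1=T_1$, namely a downward one. If agent $i$ lowers its contribution to $x_{i1}'<x_{i1}^*$, the project becomes unfunded and it receives $\frac{x_{i1}'}{x_{i1}'+(T_1-x_{i1}^*)}B_1$, an expression increasing in $x_{i1}'$ with supremum $\frac{x_{i1}^*}{T_1}B_1$ as $x_{i1}'\to x_{i1}^{*-}$. Ruling out all such deviations is therefore equivalent to $\theta_{i1}-x_{i1}^*\geq\frac{x_{i1}^*}{T_1}B_1$, which rearranges exactly to $x_{i1}^*\leq\frac{T_1}{B_1+T_1}\theta_{i1}=\bar{x}_{i1}$. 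Upward deviations only lower the funded payoff, so they never bind; this yields the stated form $\{x_{i1}^*\mid x_{i1}^*\leq\bar{x}_{i1},\ C_1=T_1\}$ of the equilibrium set, conditional on every equilibrium being funded.

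Third, I would settle the dichotomy on $B_1$ by a counting argument. A profile with $C_1=\sum_i x_{i1}^*=T_1$ and every $x_{i1}^*\in[0,\bar{x}_{i1}]$ exists if and only if the caps can cover the target, i.e. $\sum_i\bar{x}_{i1}=\frac{T_1}{B_1+T_1}\vartheta_1\geq T_1$, which is precisely $B_1\leq\vartheta_1-T_1$. When this holds, any split of $T_1$ under the caps is an equilibrium; when $B_1>\vartheta_1-T_1$ the caps sum to strictly less than $T_1$, so no funded profile can respect all of them.

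The step I expect to be the main obstacle is proving that every equilibrium must be funded, which is what makes the second paragraph's characterization exhaustive and forces the equilibrium set to be genuinely empty (rather than merely ``not of the funded type'') when $B_1>\vartheta_1-T_1$. My approach is to show no unfunded profile survives: if $C_1<T_1$, the strict monotonicity of the refund in own contribution means any agent facing positive contributions from others and having room to raise $x_{i1}$ without yet reaching $T_1$ gains strictly, so the only unfunded rest points drive contributions up to the funding threshold; there the surplus hypothesis $\vartheta_1>T_1$ leaves enough aggregate value that completing the funding is strictly profitable for some agent, contradicting equilibrium. Handling the degenerate single-contributor case and making the ``push to the threshold'' argument fully rigorous is where the real care is needed.
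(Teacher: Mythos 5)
The paper itself contains no proof of this statement: Theorem~\ref{thm:ppr} is imported from \citet{zubrickas2014provision} as background for single-project CC, and neither the body nor the appendix proves it. Measured against the standard argument for PPR, your reconstruction is correct and follows the same route: (i) over-funding is eliminated by shaving, so funded equilibria have $C_1=T_1$; (ii) at $C_1=T_1$ the only binding deviation is downward, whose payoff has supremum $\frac{x_{i1}^*}{T_1}B_1$, and ruling it out is exactly the cap $x_{i1}^*\le\bar{x}_{i1}=\frac{T_1}{B_1+T_1}\theta_{i1}$ (the weak inequality is right precisely because that supremum is not attained); (iii) a funded profile respecting all caps exists iff $\sum_i\bar{x}_{i1}=\frac{T_1}{B_1+T_1}\vartheta_1\ge T_1$, i.e. iff $B_1\le\vartheta_1-T_1$; (iv) unfunded profiles are never equilibria, which is what makes the characterization exhaustive and the set genuinely empty when $B_1>\vartheta_1-T_1$.

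One repair to step (iv), which you rightly flag as the delicate part. The surplus hypothesis $\vartheta_1>T_1$ is not what kills unfunded profiles, and the phrase ``completing the funding is strictly profitable for some agent'' is a detour that need not hold agent-by-agent. The clean argument is pure monotonicity: at any profile with $0<C_1<T_1$, pick an agent $i$ facing $C_{-i}>0$ (one exists whenever $n\ge 2$); since $x\mapsto \frac{x}{x+C_{-i}}B_1$ is strictly increasing when $C_{-i}>0$, no contribution of agent $i$ short of the threshold can be a best response --- a contradiction regardless of whether completing the funding would pay. The profile $C_1=0$ fails because any agent can secure refund $B_1>0$ with an arbitrarily small contribution, and the sole-contributor profile (where monotonicity in own contribution degenerates) fails because a non-contributor can move from refund $0$ to a positive refund. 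The hypotheses $\vartheta_1>T_1$ and $B_1\le\vartheta_1-T_1$ enter only in step (iii). Note also that the whole argument implicitly assumes $n\ge 2$: for $n=1$ and $B_1=\vartheta_1-T_1$ exactly, every $x_{11}\in(0,T_1)$ yields refund $B_1=\theta_{11}-T_1$ and is also an equilibrium, a degenerate corner the cited statement glosses over. With these points made precise, your proof is complete.
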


We have $\bar{x}_{i1} = \frac{T_1}{B_1+T_1}\theta_{i1}$ as the upper-bound of the equilibrium contribution, $\forall i \in N$. In PPR, the PSNE strategies in Theorem~\ref{thm:ppr} are the unique equilibrium of the game when,

\begin{equation}\label{eqn::equal}
    B_1= \vartheta_1 - T_1 \implies \sum_{i\in N} \bar{x}_{i1} = T_1.
\end{equation}


{\paragraph{Funding Guarantees.}} For single project CC, \citet{damlePricai} show that the project is funded at equilibrium for any refund scheme that satisfies Condition~\ref{cond1}.
Trivially, one may observe that the refund scheme in PPR, i.e., ${r}^{PPR}_{i1}=\left(\frac{x_{i1}}{C_1}\right)B_1$, $\forall i$, also satisfies Condition~\ref{cond1}. \citet{damlePricai} propose other refund schemes which satisfy Condition~\ref{cond1} and are exponential or polynomial in $x$. We remark that our results hold for any refund scheme satisfying Condition~\ref{cond1}.

\section{Funding Guarantees for Combinatorial CC under Budget Surplus}
\label{sec:ee}

For CC under Budget Surplus (Def. \ref{def:bs}) sufficient overall budget exists to fund all the projects. Theorem~\ref{thm::noneqm} shows that despite the sufficient budget, projects may not get funded as the set of equilibrium contributions for an agent may not exist. Unlike Theorem~\ref{thm:ppr}, agents may not have well-defined contributions satisfying PSNE. The non-existence results due to the uneven distribution of budget among the agents. Hence, agents with higher budgets exploit the mechanism to obtain higher refunds while ensuring the projects remain unfunded. 

\begin{theorem}
\label{thm::noneqm}
Given $(R_j)_{j\in P}$ which satisfy Condition~\ref{cond1}, there are Budget Surplus (Def.~\ref{def:bs}) game instances of $\mathcal{M}_{CC}$ with $B_j=\vartheta_j - T_j,\forall j$ such that there is no equilibrium. That is, the set of equilibrium contributions may be empty.
\end{theorem}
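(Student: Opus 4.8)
The plan is to exhibit an explicit Budget Surplus instance of $\mathcal{M}_{CC}$ for which no PSNE contribution profile exists, and the simplest setting that should already expose the obstruction is a single project ($P=\{1\}$) with at least two agents whose budgets are distributed very unevenly. By Theorem~\ref{thm:ppr}, when $B_1=\vartheta_1-T_1$ the unique candidate equilibrium forces $C_1=T_1$ and requires each agent to contribute at most its cap $\bar{x}_{i1}=\frac{T_1}{B_1+T_1}\theta_{i1}$, with $\sum_i \bar{x}_{i1}=T_1$ by Eq.~\eqref{eqn::equal}. The idea is to choose budgets $\gamma_i$ so that Budget Surplus holds overall ($\sum_i \gamma_i \ge T_1$), yet some agent $i$ has $\gamma_i < \bar{x}_{i1}$, so that the ``required'' equilibrium contribution is infeasible for that agent. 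Once an agent cannot meet its cap, the shortfall must be covered by another agent, but that pushes the other agent strictly above its own cap, where its funded utility drops below its unfunded (refund) utility; that agent then strictly prefers to withdraw down to its cap (or below), leaving the project underfunded.

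First I would fix concrete numbers: take $n=2$, pick valuations $\theta_{11},\theta_{21}$ with $\vartheta_1=\theta_{11}+\theta_{21}>T_1$, set $B_1=\vartheta_1-T_1$ so that the caps satisfy $\bar{x}_{11}+\bar{x}_{21}=T_1$, and then assign a highly skewed budget, e.g. give agent $2$ a large budget but agent $1$ a budget $\gamma_1<\bar{x}_{11}$, while still ensuring $\gamma_1+\gamma_2\ge T_1$ so Definition~\ref{def:bs} is met. Second, I would argue that any funded profile must have $C_1\ge T_1$; by the Condition~\ref{cond1} (CM) structure and the single-project equilibrium analysis, a rational agent never contributes strictly beyond what is needed, so a funded equilibrium would require $C_1=T_1$ with each agent at or below its cap. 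Third, I would show this is impossible: since agent $1$ is budget-capped below $\bar{x}_{11}$, the only way to reach $C_1=T_1$ is for agent $2$ to contribute more than $\bar{x}_{21}$. Fourth, I would verify that at any such point agent $2$'s funded utility $\theta_{21}-x_{21}$ has fallen below its refund $r_{21}$ (this is exactly the crossover captured by the cap $\bar{x}_{21}$ and illustrated in Figure~\ref{fig:my_proof}), so agent $2$ strictly prefers to deviate downward, breaking funding. Finally, the unfunded profile is not an equilibrium either: because $R$ satisfies CM, the refund is strictly increasing in contribution, so whenever a project sits just below target an agent with remaining budget and positive marginal valuation can profitably nudge it—but the skewed budget prevents a stable funded configuration, so no profile is simultaneously stable, i.e. the PSNE set is empty.

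The main obstacle I anticipate is the last dichotomy: ruling out \emph{both} a funded and an unfunded equilibrium simultaneously, and doing so for \emph{every} refund scheme satisfying Condition~\ref{cond1} rather than just PPR. For the funded case the single-project cap analysis of Theorem~\ref{thm:ppr} does the work, but that theorem is stated for PPR specifically; for a general CM scheme I would instead argue abstractly that CM (strict monotonicity of $R$ in $x$) together with $B_j=\vartheta_j-T_j$ forces an analogous per-agent cap beyond which funded utility is dominated by the refund, and that the budget skew makes the caps jointly unsatisfiable. The delicate point is handling the unfunded case: I must show the capped agent always has an incentive to move (so ``everyone contributes nothing'' is not an equilibrium) while the uncapped agent is simultaneously unwilling to single-handedly complete the funding, and that these two forces never reconcile. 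Pinning down this tension—each best response chases the refund upward, yet completing the project triggers the funded-utility collapse for whoever crosses their cap—is where the real content of the non-existence argument lies.
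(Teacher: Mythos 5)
Your proposal is correct and follows essentially the same route as the paper's proof: a Budget Surplus instance with skewed budgets forces some rich agent above its crossover point $\bar{x}_{ij}$, where the CM refund dominates funded utility, so best responses chase the funding threshold from below (with $\epsilon \to 0$) and the PSNE set is empty. Your two-agent, single-project instance is just the minimal case of the paper's $N_1$/$N_2$ partition (agents who cannot reach their caps versus agents who must then exceed theirs), and your explicit ruling-out of unfunded candidate profiles is, if anything, more careful than the paper's own sketch.
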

\begin{proof} Consider $P$ projects and $N$ agents s.t.  Def.~\ref{def:bs} is satisfied, i.e., $\sum_{i\in N} \gamma_i \ge \sum_{j\in P} T_j$. We can easily construct game instances where there exists non-empty $N_1\subset N$ s.t. $\sum_{i\in N_1} \gamma_i < \min_j T_j$. To satisfy Budget Surplus (Def.~\ref{def:bs}), $N_2 = N\setminus N_1$ must have enough budget so that the agents in $N_1+N_2$ can fund all the projects.

Each agent $i$ receives  a funded utility $\sigma_{ij}^F = \theta_{ij} - x_{ij}$ for contributing $x_{ij}$ towards project $j$. That is, as $x_{ij}\uparrow\implies\sigma_{ij}^F\downarrow$. 
The agent may also receive an unfunded utility of $\sigma_{ij}^U=r_{ij} = R_j(x_{ij},B_j,\cdot)$ for project $j$. Since $R_j$ is monotonically increasing (Condition \ref{cond1}), $x_{ij}\uparrow\implies\sigma_{ij}^U\uparrow$. 
We depict this scenario with Figure \ref{fig:my_proof}. Observe that  $\sigma_{ij}^U$ and $\sigma_{ij}^F$ intersect at the upper-bound of the equilibrium contribution, $\bar{x}_{ij}$ (Theorem~\ref{thm:ppr}), where $\sigma_{ij}^F=\sigma_{ij}^U$. For any $x_{ij} > \bar{x}_{ij}$, $\sigma_{ij}^U>\sigma_{ij}^F$. The rest of the proof (see Appendix~\ref{app::thm2}) shows that $~\exists i\in N_2$ s.t. $\hat{x}_{ij}>\bar{x}_{ij}$ 
which in turn is not possible at equilibrium due to discontinuous utility structure at $\bar{x}_{ij}$.
\if 0
to fund project $j$. Moreover, as $x_{ij}$ increases, $\sigma_{ij}^U$ increases. If agent $i$ contributes $x_{ij}=\hat{x}_{ij}$ then $j$ gets funded but $\sigma^F_{ij}<\sigma_{ij}^U$. Thus the agent contributes $x_{ij} = \hat{x}_{ij} - \epsilon$, s.t. $\epsilon\to 0$ and $\epsilon > 0$, i.e., the contributions are not well-defined. \fi
\end{proof}

Observe that if any project $j$ is funded at equilibrium then the equilibrium set $(x_{i1}^*, \ldots, x_{ip}^*)_{i\in N}$ can not be empty, contradicting Theorem~\ref{thm::noneqm}. Corollary~\ref{cor::bs} captures this observation.

\begin{corollary}\label{cor::bs}
Given $(R_j)_{j\in P}$ satisfying CM (Condition~\ref{cond1}), there are  game instances of $\mathcal{M}_{CC}$ s.t. even with Budget Surplus (Def.~\ref{def:bs}), no project in $P$ may be funded at equilibrium.
\end{corollary}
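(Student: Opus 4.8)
The plan is to derive Corollary~\ref{cor::bs} as a direct logical consequence of Theorem~\ref{thm::noneqm}, rather than constructing fresh game instances. The key conceptual link is the following contrapositive observation: if some project $j\in P$ is funded at equilibrium, then by definition there exists a well-defined equilibrium contribution profile $(x_{i1}^*,\ldots,x_{ip}^*)_{i\in N}$ realizing that funding; in particular the set of equilibrium contributions is nonempty. Theorem~\ref{thm::noneqm} exhibits Budget Surplus instances of $\mathcal{M}_{CC}$ (with $B_j=\vartheta_j-T_j$) for which the set of equilibrium contributions is empty. On precisely those instances, no equilibrium exists at all, so vacuously no project can be funded at equilibrium.

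Concretely, I would structure the argument as follows. First, I would reuse the very game instances guaranteed by Theorem~\ref{thm::noneqm}: those satisfying Budget Surplus (Def.~\ref{def:bs}) with $B_j=\vartheta_j-T_j$ for all $j$ and having no equilibrium. Second, I would invoke the definition of PSNE (Def.~\ref{def:psne}) to note that ``project $j$ is funded at equilibrium'' presupposes the existence of at least one PSNE profile, which is exactly what Theorem~\ref{thm::noneqm} rules out for these instances. Third, I would conclude that on these instances no project in $P$ can be funded at equilibrium, since the equilibrium set is empty. This is the observation already flagged in the sentence immediately preceding the corollary statement, so the proof is essentially a one-line formalization of that remark.

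I do not anticipate a genuine obstacle here, since all the substantive work — constructing the uneven-budget instances, showing that the forced contribution $\hat{x}_{ij}>\bar{x}_{ij}$ cannot be sustained at equilibrium due to the discontinuity of $\sigma_{ij}$ at $\bar{x}_{ij}$ — has already been carried out in Theorem~\ref{thm::noneqm} (and its appendix). The only point requiring care is the logical quantifier structure: the corollary claims existence of \emph{some} bad instance, matching the existential claim of the theorem, so the implication goes through cleanly without needing to argue about \emph{every} instance. The remaining subtlety worth stating explicitly is that nonexistence of equilibrium is strictly stronger than ``the optimal subset is unfunded at equilibrium,'' and it is this stronger conclusion that immediately forbids funding of any project, giving the corollary for free.
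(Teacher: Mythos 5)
Your proposal is correct and matches the paper's own argument exactly: the paper derives Corollary~\ref{cor::bs} from Theorem~\ref{thm::noneqm} by the same contrapositive observation that a funded project at equilibrium would require a nonempty equilibrium set, which the theorem's instances preclude. No gap here; this is precisely the one-line formalization the paper intends.
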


With~Corollary~\ref{cor::bs}, we prove that Budget Surplus is not sufficient to fund every project at equilibrium. 
To this end, we next identify the sufficient condition to ensure the funding of every project, under Budget Surplus.

\subsection*{Subset Feasibility}
With $N_1$ and $N_2$ in Theorem~\ref{thm::noneqm}'s proof, we assume a specific distribution on agents' budget.
To resolve this, we introduce \emph{Subset Feasibility} which assumes a restriction on each agent's budget distribution. Informally, if each agent $i$ has enough budget to contribute $\bar{x}_{ij}$ (see Figure \ref{fig:my_proof}) for $j \in M$, $M\subseteq P$, then Subset Feasibility is satisfied for $M$. Formally,

\begin{definition}[Subset Feasibility for $M$ (SF$_M$)]\label{def:bf}
Given an instance of $\mathcal{M}_{CC}$ with $(R_j)_{j\in P}$ satisfying Condition~\ref{cond1}, SF$_M$, $M \subseteq P$, is satisfied if, $\forall i\in N $ we have $\gamma_i \geq \sum_{j\in M} \bar{x}_{ij}$. Here,  $\theta_{ij} - \bar{x}_{ij} = R_j(\bar{x}_{ij}, B_j, \cdot)$ (refer Figure \ref{fig:my_proof}).
\end{definition}

\begin{claim}\label{claim1}
Given any $(R_j)_{j\in P}$ whose equilibrium contributions satisfy $\sum_i \bar{x}_{ij} \geq T_j$, we have SF$_P$ $\implies$ BS.
\end{claim}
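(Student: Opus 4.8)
The plan is to prove the implication SF$_P$ $\implies$ BS directly by chaining together the definitions of the two conditions together with the stated hypothesis on the equilibrium contributions. Recall that Budget Surplus (Def.~\ref{def:bs}) is exactly the inequality $\sum_{i\in N}\gamma_i \geq \sum_{j\in P}T_j$, so my goal reduces to deriving this single aggregate inequality from Subset Feasibility for the full project set $P$.

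First I would unpack SF$_P$. By Definition~\ref{def:bf} applied with $M = P$, Subset Feasibility guarantees that every agent $i\in N$ individually has enough budget to cover its equilibrium contributions across \emph{all} projects, i.e. $\gamma_i \geq \sum_{j\in P}\bar{x}_{ij}$ for each $i$. Summing this per-agent inequality over all agents $i\in N$ yields
\begin{equation*}
\sum_{i\in N}\gamma_i \;\geq\; \sum_{i\in N}\sum_{j\in P}\bar{x}_{ij}.
\end{equation*}
Next I would swap the order of summation on the right-hand side so that the inner sum runs over agents for a fixed project $j$, giving $\sum_{j\in P}\sum_{i\in N}\bar{x}_{ij} = \sum_{j\in P}\bigl(\sum_{i\in N}\bar{x}_{ij}\bigr)$.

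The final step invokes the hypothesis of the claim, namely that the equilibrium contributions of $(R_j)_{j\in P}$ satisfy $\sum_{i\in N}\bar{x}_{ij}\geq T_j$ for every project $j$. Substituting this termwise lower bound into the double sum gives $\sum_{j\in P}\bigl(\sum_{i\in N}\bar{x}_{ij}\bigr)\geq \sum_{j\in P}T_j$, and combining with the earlier chain of inequalities produces $\sum_{i\in N}\gamma_i \geq \sum_{j\in P}T_j$, which is precisely Budget Surplus.

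I do not anticipate a genuine obstacle here, since the argument is a short composition of a per-agent budget bound, a Fubini-style interchange of finite sums, and the per-project funding hypothesis; all three sums are finite so no convergence or ordering subtleties arise. The one point that warrants care is making sure the quantifiers line up correctly: SF$_P$ is a statement holding for \emph{every} agent (so it survives summation over $i$), while the contribution hypothesis holds for \emph{every} project (so it survives summation over $j$), and it is the interchange of summation order that lets these two universally-quantified facts meet. I would state explicitly that the hypothesis $\sum_i\bar{x}_{ij}\geq T_j$ is exactly the condition under which the upper-bound contributions $\bar{x}_{ij}$ are collectively sufficient to reach each target $T_j$, which is the natural compatibility assumption linking the refund scheme's equilibrium to the projects' costs.
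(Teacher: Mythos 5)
Your proof is correct and follows essentially the same route as the paper's own argument: sum the per-agent inequality $\gamma_i \geq \sum_{j\in P}\bar{x}_{ij}$ from SF$_P$ over all agents, exchange the order of summation, and apply the hypothesis $\sum_{i}\bar{x}_{ij}\geq T_j$ termwise to obtain $\sum_{i\in N}\gamma_i \geq \sum_{j\in P}T_j$. The only difference is that you spell out the quantifier bookkeeping and the sum interchange explicitly, which the paper compresses into a single chain of inequalities.
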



Claim~\ref{claim1} follows from trivial manipulation (see Appendix~\ref{app::clm1}). Similarly, we have   BS $ \centernot\implies $ SF$_P$.
From Claim~\ref{claim1}, it is welfare optimal to fund every available project under $SF_P$. Theorem~\ref{thm:bf_bs} indeed proves that under SF$_P$ each project $j\in P$ gets funded at equilibrium, thereby generating optimal social welfare. The formal proof is available in Appendix~\ref{app:thm3}.
\begin{theorem}
\label{thm:bf_bs}
Given $\mathcal{M}_{CC}$ and $(R_j)_{j\in P}$ satisfying CM (Condition~\ref{cond1}) such that $SF_{P}$ is satisfied, at equilibrium all the projects are funded, i.e., $C_j = T_j,~\forall j \in P$ if $B_j\leq \vartheta_j-T_j,~\forall j \in P$. Further, the set of PSNEs are:\\ $\left\{(x_{ij}^*)_{j\in P} ~|~ \sigma_{ij}^F(x_{ij}^*;\cdot)\geq \sigma_{ij}^U(x_{ij}^*;\cdot), \forall j\in P,~\forall i\in N\right\}$.
\end{theorem}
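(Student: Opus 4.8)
The plan is to exploit two structural features—additivity of utilities across projects and the budget slack guaranteed by $SF_P$—to reduce the combinatorial game to $p$ independent single-project PPR games, and then to invoke Theorem~\ref{thm:ppr} on each. Concretely, since $U_i(\cdot)=\sum_{j\in P}\sigma_{ij}$ is separable in $j$, the only coupling across projects is the budget constraint $\sum_{j}x_{ij}\le\gamma_i$. I will show this constraint is never binding at a best response, so each project can be analyzed in isolation and the combinatorial PSNE set is the product of the per-project ones.

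First I would record a reachability fact: from $\sum_{i}r_{ij}=B_j$ together with the defining relation $\theta_{ij}-\bar{x}_{ij}=R_j(\bar{x}_{ij},B_j,\cdot)$, summing over $i$ gives $\vartheta_j-\sum_i\bar{x}_{ij}=B_j$, whence $\sum_i\bar{x}_{ij}=\vartheta_j-B_j\ge T_j$ precisely when $B_j\le\vartheta_j-T_j$. Thus the theorem's hypothesis guarantees each target $T_j$ is attainable using per-agent contributions bounded by $\bar{x}_{ij}$. Next I would argue budget slackness: contributing more than $\bar{x}_{ij}$ to any single project is never part of a best response, because for $x_{ij}>\bar{x}_{ij}$ we have $\sigma_{ij}^U(x_{ij})>\sigma_{ij}^F(x_{ij})$ (the two curves of Figure~\ref{fig:my_proof} cross at $\bar{x}_{ij}$), so the agent strictly prefers a smaller contribution. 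Restricting attention to profiles with $x_{ij}\le\bar{x}_{ij}$ for all $j$, $SF_P$ then yields $\sum_j x_{ij}\le\sum_j\bar{x}_{ij}\le\gamma_i$, so the budget constraint drops out and the projects decouple.

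Given the decoupling, I would verify the claimed PSNE set directly. For sufficiency, take any profile with $C_j=T_j$ and $x_{ij}^*\le\bar{x}_{ij}$ (equivalently $\sigma_{ij}^F\ge\sigma_{ij}^U$) for all $i,j$, and check no unilateral deviation helps: raising $x_{ij}$ over-funds $j$ and strictly lowers $\theta_{ij}-x_{ij}$; lowering $x_{ij}$ unfunds $j$ (as $C_j=T_j$ exactly) and returns $R_j(x_{ij}',\cdot)<R_j(x_{ij}^*,\cdot)\le\theta_{ij}-x_{ij}^*$ by Condition~\ref{cond1} and $x_{ij}^*\le\bar{x}_{ij}$; and no cross-project reallocation helps since the budget is slack. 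For necessity, I would rule out the two alternatives: an over-funded project ($C_j>T_j$) lets any positive contributor shave its contribution while keeping $j$ funded, strictly improving its utility; an unfunded project ($C_j<T_j$) is unstable by the monotone-refund argument of Theorem~\ref{thm::noneqm}, since $\sigma_{ij}^U$ is strictly increasing and every agent wants to push its contribution up toward the funding threshold. Hence $C_j=T_j$ at equilibrium, and $x_{ij}^*\le\bar{x}_{ij}$ is forced because otherwise the unfunding deviation above would be profitable; reachability ($\sum_i\bar{x}_{ij}\ge T_j$) guarantees this set is non-empty.

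The main obstacle I anticipate is making the decoupling rigorous, i.e.\ justifying \emph{without circularity} that $SF_P$ leaves the budget constraint slack at equilibrium. The resolution is that the bound $x_{ij}\le\bar{x}_{ij}$ is a property of any best response (coming from the Condition~\ref{cond1}-induced crossing at $\bar{x}_{ij}$), not an equilibrium assumption, so it may be imposed \emph{before} invoking $SF_P$. A secondary delicate point is the unfunded case: the same discontinuity at $\bar{x}_{ij}$ that \emph{destroys} equilibria in Theorem~\ref{thm::noneqm} must here instead be shown to \emph{force} funding, which is exactly where reachability $\sum_i\bar{x}_{ij}\ge T_j$—ensuring enough agents can collectively reach $T_j$ within their $\bar{x}_{ij}$ bounds—does the work.
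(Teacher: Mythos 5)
Your sufficiency argument is essentially the paper's own proof (Appendix~\ref{app:thm3}): restrict to $x_{ij}\le\bar{x}_{ij}$, use SF$_P$ to make every budget constraint slack so projects decouple, fund each project exactly at $T_j$ (possible since $\sum_i \bar{x}_{ij}\ge T_j$), and check the two per-project deviations (shaving unfunds $j$ and yields $R_j(\hat{x}_{ij},\cdot)<R_j(x^*_{ij},\cdot)\le\theta_{ij}-x^*_{ij}$; raising only lowers $\theta_{ij}-x_{ij}$). One technical slip there: your identity $\sum_i\bar{x}_{ij}=\vartheta_j-B_j$ does not hold in general, because the budget-balance assumption $\sum_i r_{ij}=B_j$ applies to refunds evaluated at a realized (unfunded) contribution profile, not at the collection of crossing points $\bar{x}_{ij}$; for PPR one gets $\sum_i\bar{x}_{ij}=\vartheta_j T_j/(B_j+T_j)$, which equals $\vartheta_j-B_j$ only when $B_j=\vartheta_j-T_j$. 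The inequality $\sum_i\bar{x}_{ij}\ge T_j$ under $B_j\le\vartheta_j-T_j$ is what is true and is all you need (the paper simply takes it as given, cf.\ Claim~\ref{claim1} and Eq.~\ref{eqn::equal}).

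The genuine gap is in your necessity direction, specifically the ``budget slackness'' lemma on which your decoupling rests. Your justification --- for $x_{ij}>\bar{x}_{ij}$ we have $\sigma_{ij}^U>\sigma_{ij}^F$, ``so the agent strictly prefers a smaller contribution'' --- is backwards precisely when project $j$ is \emph{unfunded}: there the agent is already collecting $\sigma_{ij}^U=R_j(x_{ij},\cdot)$, and Condition~\ref{cond1} makes this strictly increasing in $x_{ij}$, so such an agent prefers a \emph{larger} contribution, not a smaller one. Hence you cannot impose $x_{ij}\le\bar{x}_{ij}$ ``before invoking SF$_P$'' as claimed: profiles in which budget-exhausted agents dump contributions above $\bar{x}_{ij}$ into unfunded projects to farm refunds are untouched by your lemma, and your instability argument for unfunded projects (``every agent wants to push its contribution up'') fails exactly there because those agents have no slack left to push with. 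Note the paper's own proof only establishes sufficiency, so your core matches it; but if you want the full characterization, the clean route is different: (i) at equilibrium no funded project is strictly over-funded, since a positive contributor could shave while keeping it funded; (ii) if some project were unfunded, every agent must be budget-exhausted (otherwise adding $\epsilon$ strictly raises its refund by CM), so total contributions equal $\sum_i\gamma_i\ge\sum_j T_j$ (Claim~\ref{claim1}), yet by (i) funded projects absorb exactly their targets and unfunded ones strictly less --- a contradiction; (iii) once $C_j=T_j$ for all $j$, any $x^*_{ij}>\bar{x}_{ij}$ is eliminated by shaving to just below $x^*_{ij}$, which unfunds $j$ and yields a refund arbitrarily close to $R_j(x^*_{ij},\cdot)>\theta_{ij}-x^*_{ij}$, forcing $\sigma_{ij}^F\ge\sigma_{ij}^U$ at every equilibrium.
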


Theorem~\ref{thm:bf_bs} implies that, under SF$_P$, the socially efficient equilibrium (with $P^\star = P$) is achieved. Intuitively, under SF$_P$ combinatorial CC collapses to simultaneous single projects; and thus, we can provide closed-form equilibrium contributions. However, SF$_P$ is a strong assumption and, in general, may not be satisfied. In fact, the weaker notion of Budget Surplus itself may not always apply. Therefore, we next study combinatorial CC with Budget Deficit.

\section{Impossibility of Achieving Socially Efficient Equilibrium for Combinatorial CC under Budget Deficit}
\label{sec:we}

We now focus on the scenario when there is \emph{Budget Deficit}, i.e., $\sum_{i\in N} \gamma_i < \sum_{j\in P} T_j$. In this scenario, only a subset of projects can be funded. Unfortunately, identifying the subset of projects funded at equilibrium is challenging. 
In CC, the agents decide which projects to contribute to based on their private valuations and available refund. This circular dependence of the equilibrium contributions and the set of funded projects make providing analytical guarantees challenging on the funded set. To analyze agents' equilibrium behavior and funding guarantees, we fix our focus on the subset of projects that maximize social welfare, i.e., $P^\star$ (Def. \ref{def:wo}).



In this section, we first show that funding $P^\star \subset P$ at equilibrium is, in general, not possible for any $R(\cdot)$ satisfying Condition \ref{cond1}. Second, we prove that even with the stronger assumption of Subset Feasibility of the optimal welfare set, i.e., SF$_{P^\star}$, we may not achieve socially efficient equilibrium due to agents' strategic deviations. Last, we show that computing an agent $i^\prime$'s optimal deviation, given the contributions of the other agents $N\setminus \{i^\prime\}$, is NP-Hard.

\subsection{Welfare Optimality at Equilibrium}

Consider the following example instance.

\begin{instance}\label{eg::one}
Let $P=\{1,2\}$ and $N=\{1,2\}$. Let $\gamma_1=1,\gamma_2=0,\theta_{11}=1,\theta_{12}=2\mbox{~and~}\theta_{22}=1$ with $\theta_{21}=10$.
\end{instance}

In Example~\ref{eg::one}, the maximum funded utility agent 1 can receive from project 1 is $0$ and unfunded utility $r_{11} < \theta_{11} = 1$. On the other hand, the agent obtains a utility of $1$ when contributing to project 2. Hence at equilibrium, project 2 gets funded, although it is welfare optimal to fund project 1. Thus, socially efficient equilibrium is not achieved.

%


\renewcommand{\ALG@name}{Procedure}
\begin{algorithm}[!t] 
\small
\caption{\label{algo::F3} Instance with $P=N=\{1,2\}$ and fixed $R(\cdot)$}
\begin{algorithmic}[1]


 \Procedure{GenerateValues}{$R(\cdot)$}
\State $T_1\leftarrow \mathbb{R}_+$
\State Choose $\theta_{11}$ s.t. $\bar{x}_{11}< T_1 < \theta_{11}$ based on $R_1(\cdot)$

\State Choose $\theta_{21}$ s.t. $\bar{x}_{21}:=T_1 - \bar{x}_{11}$
\State Set $T_2 = \bar{x}_{21}$, $\theta_{12} = 0$ and choose $\theta_{22}$ s.t. 
\Statex ~~~\qquad $\theta_{21} < \theta_{22} < \theta_{11} + \theta_{21} - x_{11}^*$ \Comment{\textcolor{blue}{$P^\star = \{1\}$}}

\State Set $\gamma_1 := \bar{x}_{11}$ and $\gamma_2 := \bar{x}_{21}$ \Comment{\textcolor{blue}{\textit{Satisfying SF$_{P^\star}$}}}

        \State{\Return $\theta$'s, $\gamma$'s, and $T$'s} \Comment{\textcolor{blue}{s.t. Agent 2 deviates}}
  \EndProcedure
%
%
\end{algorithmic}
\end{algorithm}
%

%

Example~\ref{eg::one} is one pathological case where the agent with high valuation has zero budget, leading to sub-optimal outcome at equilibrium. Hence, we next strengthen the assumption on the budgets of the agents. Let $P^\star$ be the non-trivial welfare optimal subset and we assume that Subset Feasibility is satisfied for $P^\star$, i.e., SF$_{P^\star}$. Recall that with SF$_{P^\star}$, we assume that every agent has enough budget to contribute $\bar{x}_{ij}$ in $P^\star$. Theorem~\ref{thm:bf_bd} shows that despite this strong assumption, achieving socially efficient equilibrium may not be possible.


\begin{theorem} 
\label{thm:bf_bd}
Given an instance of $\mathcal{M}_{CC}$, a unique non-trivial $P^\star\subset P$ may not be funded at equilibrium even with Subset Feasibility for $P^\star$, SF$_{P^\star}$, for any set of $(R_j)_{j\in P}$ satisfying Condition \ref{cond1}.
\end{theorem}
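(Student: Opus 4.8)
The plan is to establish this impossibility constructively: I exhibit a single instance of $\mathcal{M}_{CC}$, namely the one produced by Procedure~\ref{algo::F3}, for which $P^\star=\{1\}$ is the unique non-trivial welfare-optimal subset and SF$_{P^\star}$ holds, yet no PSNE funds $P^\star$. Since one instance suffices and the procedure takes the given refund scheme as input, this argument covers every $(R_j)_{j\in P}$ satisfying Condition~\ref{cond1} at once. I take $P=N=\{1,2\}$ with project $1$ as the intended welfare optimum and set $\theta_{12}=0$, so agent $1$ only cares about project $1$.

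The heart of the construction is to make budgets tight and to convert agent $2$'s required contribution into a standalone temptation. I choose $\theta_{11}$ with $\bar{x}_{11}<T_1<\theta_{11}$ (project $1$ is worth funding but beyond agent $1$ alone), then pick $\theta_{21}$ so that $\bar{x}_{21}=T_1-\bar{x}_{11}$, making the two caps exactly cover $T_1$; setting $\gamma_1=\bar{x}_{11}$ and $\gamma_2=\bar{x}_{21}$ makes SF$_{P^\star}$ hold with equality and forces $x_{11}^*=\bar{x}_{11}$, $x_{21}^*=\bar{x}_{21}$ in any profile funding project $1$. The decisive choices are $T_2=\bar{x}_{21}$ and $\theta_{22}$ in the window $\theta_{21}<\theta_{22}<\theta_{11}+\theta_{21}-x_{11}^*$: the left inequality makes agent $2$ prefer project $2$, while the right gives $\vartheta_1-T_1>\vartheta_2-T_2$, so $P^\star=\{1\}$ is the unique maximizer among the budget-feasible subsets $\emptyset,\{1\},\{2\}$ ($\{1,2\}$ being infeasible as $T_1+T_2>T_1=\sum_i\gamma_i$, which also certifies Budget Deficit).

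Next I verify feasibility and the equilibrium failure. Existence of the parameters reduces to the window being non-empty, i.e. $x_{11}^*<\theta_{11}$, which follows from $x_{11}^*\le\bar{x}_{11}<T_1<\theta_{11}$; SF$_{P^\star}$ and Budget Deficit then hold by construction. To show no PSNE funds project $1$, observe that funding it forces agent $2$ to contribute its entire budget $\bar{x}_{21}$ to project $1$ (and hence $0$ to project $2$), yielding agent $2$ the utility $\theta_{21}-\bar{x}_{21}$. Agent $2$ may instead move its whole budget to project $2$; since $T_2=\bar{x}_{21}=\gamma_2$, this single-handedly funds project $2$ and unfunds project $1$, giving utility $\theta_{22}-\bar{x}_{21}>\theta_{21}-\bar{x}_{21}$ by $\theta_{22}>\theta_{21}$, a strict improvement. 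Thus funding $P^\star$ is never a best response for agent $2$, and $P^\star$ is unfunded at every equilibrium.

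The main obstacle is that $R$ is an arbitrary CM scheme rather than PPR, so the caps $\bar{x}_{ij}$ are only defined implicitly by the indifference $\theta_{ij}-\bar{x}_{ij}=R_j(\bar{x}_{ij},B_j,\cdot)$ of Definition~\ref{def:bf}. I must therefore argue that these caps exist and are unique for the chosen valuations, using strict monotonicity of $R_j$ against the decreasing $\sigma_{ij}^F$, i.e. the single-crossing picture of Figure~\ref{fig:my_proof}, and that the parameters can be fixed in the procedure's order, since $\bar{x}_{21}$ determines $T_2$, which feeds back into $R_2$. A final detail is confirming that the deviator's refund on the now-unfunded project $1$ vanishes, which holds because an agent contributing $0$ receives $R_1(0,\cdot)=0$, so no refund term can rescue the would-be funding equilibrium.
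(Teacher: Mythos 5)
Your proposal is correct and is essentially the paper's own proof: the same two-agent, two-project construction via Procedure~\ref{algo::F3} (tight budgets $\gamma_1=\bar{x}_{11}$, $\gamma_2=\bar{x}_{21}$, $T_2=\bar{x}_{21}$, and $\theta_{22}$ in the window $\theta_{21}<\theta_{22}<\theta_{11}+\theta_{21}-\bar{x}_{11}$), followed by the same unilateral deviation of agent~2 to project~2 showing $\theta_{22}-\bar{x}_{21}>\theta_{21}-\bar{x}_{21}$. The only detail you gesture at rather than spell out — existence of $\theta_{11}$ with $\bar{x}_{11}<T_1<\theta_{11}$, which the paper handles via $R(0)=0$ and continuity of $\bar{x}_{11}$ in $\theta_{11}$ — matches the paper's Line~3 argument and does not change the approach.
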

\begin{proof} We provide a proof by construction for  $n=2$ and $p=2$ where one of the agents has an incentive to deviate when $P^\star$ is funded. 
Procedure \ref{algo::F3} presents the steps to construct the instance.
We first select the target cost of the project $1$, i.e. $T_1\in\mathbb{R}_+$. Given $(R_j)_{j\in P}$ under Condition~\ref{cond1}, we can always find an $\bar{x}_{11}$ for agent $1$.
At $\bar{x}_{11}$, funded utility is equal to unfunded utility (Figure \ref{fig:my_proof}). Trivially, $\bar{x}_{11} <\theta_{11}$. The rest of the proof shows that the construction defined in Procedure \ref{algo::F3} is always possible for any refund scheme satisfying Condition~\ref{cond1} (see Appendix~\ref{app:thm4}).  
\end{proof}

\subsection{CCC with Budget Deficit: Optimal Strategy}

Theorem \ref{thm:bf_bd} implies that $P^*$ may not be funded at equilibrium even when $P^*$ satisfies Subset Feasibility. In other words, w.l.o.g., an agent $i^\prime$ may have an incentive to deviate from any strategy that funds $P^*$. Motivated by such a deviation, we now address the question: \emph{Given the total contribution by $N\setminus \{i^\prime\}$ agents towards each project $j$, can the agent $i^\prime$ compute its optimal strategy?} We answer this question by (i) showing that such an optimal strategy may not exist if an agent's contribution space is continuous, i.e., $x\in\mathbb{R}_+$, and (ii) if contributions are discretized, then computing the optimal strategy is NP-Hard. 

\smallskip\noindent\textbf{MIP-CC: Mixed Integer Program for CC.} We first describe the general optimization for an agent $i^\prime$ to compute its optimal strategy (i.e., contribution). Assume that the agents $\in N\setminus \{i^\prime\}$ have contributed denoted by $C_{j-i^\prime}$. For agent $i^\prime$'s optimal strategy, we need to maximize its utility given $T_j-C_{j-i^\prime},~\forall j\in P$ and other variables such as the refund scheme $R_j$ and bonus budget $B_j$. Figure~\ref{fig::MIP} presents the formal MIP, namely \emph{MIP-CC}, which follows directly from agent $i^\prime$ utility (Def.~\ref{def:au}).

\begin{figure}[t]
\centering
\fbox{
  \parbox{0.8\columnwidth}{
  \footnotesize
\begin{align*}
 & \max_{(x_{i^\prime j})_{j\in P}} \sum_{j\in P} z_{i^\prime j} \cdot(\theta_{i^\prime j} - x_{i^\prime j}) + (1 - z_{i^\prime j}) \cdot R(x_{i^\prime j},\cdot) && \\
\mbox{\textbf{s.t.}} & \sum_{j\in P} x_{i^\prime j} \leq \gamma_{i^\prime}  && \text{\textcolor{blue}{// Budget Constraint}} \\
& x_{i^\prime j} \leq T_j-C_{j-i^\prime}, \forall j &&  \text{\textcolor{blue}{// Remaining Contribution}} \\
& \begin{rcases}
(x_{i^\prime j}-T_j+C_{j-i^\prime})\cdot z_{i^\prime j} \geq 0 , \forall j 
\\
 x_{i^\prime j}-T_j+C_{j-i^\prime} < z_{i^\prime j}, \forall j \\
  z_{i^\prime j}\in \{0,1\}, \forall j
\end{rcases} && \text{\textcolor{blue}{// Defining Indicator Variable}}
\end{align*}
}
}
\caption{MIP-CC: Mixed Integer Program to calculate Agent $i^\prime$'s optimal strategy given the contributions of the remaining agents $N\setminus \{i^\prime\}$.}
\label{fig::MIP}
\end{figure}

\paragraph{MIP-CC: Optimal Strategy May Not Exist.} We now show that MIP-CC (Figure~\ref{fig::MIP}) may not always admit well-defined contributions.

\begin{instance}\label{ex::2}
Let $P=\{1,2,3\}$ and $N=\{1,2\}$ s.t. both agents are \emph{identical}, i.e.,  each $ i\in N$ has the same value $\theta$ for each $j\in P$ and $\gamma_1=\gamma_2$. Additionally $~\forall j \in P$,  $T_j = T$ and $B_j  = \vartheta - T,$. Let the agents have budget s.t. $\gamma_i = \bar{x}_{i1}~\forall i\in N$, where $\bar{x}_{i1}$ is the upper bound equilibrium contribution for the single project case (see Figure~\ref{fig:my_proof}). 
\end{instance}

\begin{theorem}
Given an instance of $\mathcal{M}_{CC}$ and for any set of $(R_j)_{j\in P}$ satisfying Condition \ref{cond1}, an agent $i^\prime$'s optimal strategy may not exist.
\end{theorem}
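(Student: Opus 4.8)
The plan is to exhibit a single configuration --- the instance of Example~\ref{ex::2} together with a fixed contribution profile for the agents in $N\setminus\{i'\}$ --- for which the supremum of agent $i'$'s utility is approached but never attained, so that no maximizer of MIP-CC (Figure~\ref{fig::MIP}) exists. First I would fix the other agent's contributions by concentrating its entire budget on a single project, leaving the remaining projects with no outside contribution; this isolates agent $i'$'s behaviour on those projects and makes the relevant funding thresholds explicit in terms of $T$, the quantities $C_{j-i'}$, and the common upper bound $\bar{x}_{i1}=\tfrac{T}{2}$ guaranteed by Example~\ref{ex::2}.

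Next I would write agent $i'$'s utility as a function of its contribution vector and locate the discontinuities that Definition~\ref{def:au} induces. For each project $j$, the utility switches between the funded branch $\theta_{i'j}-x_{i'j}$ and the unfunded branch $R_j(x_{i'j},\cdot)$ exactly at the threshold $x_{i'j}=T_j-C_{j-i'}$; crucially, the unfunded region is \emph{open}, which is precisely the strict inequality $x_{i'j}-T_j+C_{j-i'}<z_{i'j}$ appearing in MIP-CC. Because each $R_j$ satisfies Condition~\ref{cond1}, the unfunded branch is strictly increasing, so on this open region the objective has no maximizer: it strictly increases toward the supremum $R_j(T_j-C_{j-i'})$ as $x_{i'j}$ rises to the threshold from below. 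The key comparison I would establish is that crossing the threshold (i.e.\ funding the project) yields strictly less utility than this supremum, exactly the discontinuous drop depicted in Figure~\ref{fig:my_proof}.

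With this in hand the argument becomes a non-attainment argument: I would show that any candidate feasible profile $(x_{i'j})_{j\in P}$ can be strictly improved by a small budget-feasible perturbation that shifts mass toward an unfunded threshold (or onto a project carrying no outside contribution, where even an infinitesimal contribution already secures a strictly positive refund). Hence the $\arg\max$ set is empty and an optimal strategy does not exist. Since the only property of $R_j$ used is strict monotonicity, the conclusion holds for every refund scheme satisfying Condition~\ref{cond1}. I would close by contrasting this with the discretized contribution space, where the feasible set becomes finite and a maximizer trivially exists --- which is exactly what makes the subsequent computational (NP-hardness) question meaningful.

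The hard part will be the non-attainment step itself: one must rule out a maximizer \emph{globally}, across all coordinates simultaneously and subject to the single coupling budget constraint $\sum_{j\in P} x_{i'j}\le\gamma_{i'}=\bar{x}_{i1}$, rather than project-by-project. In particular, one has to verify that the budget is tight enough that the agent is genuinely driven to the open boundary of the unfunded region --- so that the funded outcome available at the threshold cannot rescue attainment --- and that this remains true uniformly over all CM refund schemes, rather than for a single closed-form $R$.
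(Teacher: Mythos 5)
Your approach is essentially the paper's: the same Example~\ref{ex::2} instance, the same move of having the other agent sink its entire budget into project 1, and the same non-attainment argument --- the CM refund (Condition~\ref{cond1}) is strictly increasing on the open unfunded region, so agent $i'$'s utility approaches its supremum without achieving it, while hitting the threshold triggers the strictly worse funded outcome. One correction to carry into the ``hard part'' you flag: project-by-project there is \emph{no} discontinuous drop at $\bar{x}_{ij}$ (by definition the funded and unfunded branches meet there, which is what Figure~\ref{fig:my_proof} depicts), so the strict gap must come entirely from the budget coupling --- funding project 1 exhausts the budget $\gamma_2=\bar{x}_{21}$ and forfeits the two full bonuses $B$ obtainable from projects 2 and 3 with arbitrarily small contributions, which is exactly how the paper (via its two indicator cases $z_1=\{1,0,0\}$ and $z_2=\{0,0,0\}$) makes the supremum $\theta-\gamma_2+2B$ strictly exceed the attained value $\theta-\gamma_2$.
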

\begin{proof}
We use proof by construction. In Appendix~\ref{app::thm5}, we show that for Example~\ref{ex::2}, given agent 1's contribution we can create an instance of $\mathcal{M}_{CC}$ s.t. the set of indicator variable $z$s in MIP-CC can be either $z_1 =\{1,0,0\}$ or $z_2 = \{0,0,0\}$. Then, we show that agent 2's utility becomes $\theta-\gamma_2$ for $z_1$ with strategy $(\gamma_2,0,0)$ and  $2B+R(\gamma_2-\epsilon)$ for $z_2$ with strategy $(\gamma_2-\epsilon,\epsilon/2,\epsilon/2)$ where $\epsilon\leq 0$. As $\epsilon \downarrow$, agent 2's utility for $z_2$ increases. But for $\epsilon=0$, only $z_1$ is possible and agent 2 receives the utility $\theta-\gamma_2$ (strictly less than utility for $z_2$). Due to this \emph{discontinuity} in the utilities, no optimal $\epsilon$ exists, i.e., optimal strategy does not exist.
\end{proof}

\noindent\textbf{MIP-CC-D}. In order to overcome the above non-existence, we discretize the contribution space. More concretely, an agent $i$ can contribute $\kappa\cdot \delta$ where $\kappa\in \mathbb{N}^+$ and $\delta$ the smallest unit of contribution. With this restriction on an agent's contribution, the search space in MIP-CC (Figure~\ref{fig::MIP}) becomes discrete and finite. Consequently, agent $i^\prime$'s optimal strategy always exist. To distinguish MIP-CC with a discrete contribution space, we refer to it as \emph{MIP-CC-D}.  


%

\subsubsection{MIP-CC-D: Finding Optimal Strategy is NP-Hard.} 

We now show that solving MIP-CC for discrete contributions (i.e., MIP-CC-D) is NP-hard. 

\begin{theorem}\label{thm::nphard}
Given an instance of $\mathcal{M}_{CC}$ with discrete contributions and for any set of $(R_j)_{j\in P}$ satisfying Condition \ref{cond1}, computing optimal strategy for agent $i^\prime$, given the contributions of $N\setminus \{i^\prime\}$,  is NP-Hard.
\end{theorem}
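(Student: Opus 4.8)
The plan is to establish NP-hardness by a polynomial-time reduction from the decision version of the $0/1$ \textsc{Knapsack} problem, whose NP-completeness is classical and which the paper already invokes when observing that computing $P^\star$ is NP-hard. Given a \textsc{Knapsack} instance with items $1,\dots,p$ of integer weights $w_j$, integer values $v_j$, capacity $W$, and target $V$, I would build a single-deviation instance of $\mathcal{M}_{CC}$ with $p$ projects in which the decision MIP-CC-D forces on agent $i'$ is exactly ``which items go into the knapsack.'' For each item $j$ I create project $j$ and fix the already-committed contribution of the remaining agents to $C_{j-i'}=T_j-w_j$, so the amount still needed to fund project $j$ is exactly $d_j:=T_j-C_{j-i'}=w_j$. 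I set agent $i'$'s valuation to $\theta_{i'j}=v_j+w_j$, its budget to $\gamma_{i'}=W$, the contribution granularity to $\delta=1$, and take all $w_j,W$ integral. The ``Remaining Contribution'' constraint $x_{i'j}\le d_j=w_j$ then leaves agent $i'$ only two relevant options per project: contribute the full $w_j$ (funding $j$, with funded utility $\theta_{i'j}-w_j=v_j$), or contribute strictly less (leaving $j$ unfunded and collecting only its refund).

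Under the budget constraint $\sum_j x_{i'j}\le\gamma_{i'}=W$, choosing a set $S$ of projects to fund is exactly the choice of a subset of items with $\sum_{j\in S}w_j\le W$, and the funded-utility part of the objective is $\sum_{j\in S}v_j$, i.e.\ the knapsack objective. The one extra ingredient is the refund term. Because contributions to unfunded projects are returned and $R$ is strictly increasing (Condition~\ref{cond1}), the agent always has a mild incentive to scatter leftover budget across the unfunded projects to harvest refunds, and this is not part of the clean knapsack objective. I would neutralize it using only the structural facts $\sum_i r_{ij}=B_j$ and $r_{ij}\ge 0$ (so the hardness holds for \emph{any} $R$ satisfying Condition~\ref{cond1}): the total refund collectable by any strategy is at most $\sum_j B_j$. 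Scaling all knapsack values by a large integer $K$ (equivalently inflating the valuations $\theta_{i'j}$ and the target $V$ by $K$) makes every nonzero gap between achievable values at least $K$, so choosing $K>\sum_j B_j$ guarantees that for any strategy the objective equals $K\cdot\!\big(\sum_{j\in S}v_j\big)+\rho$ with $0\le\rho<K$. Hence $\lfloor \mathrm{OPT}(\text{MIP-CC-D})/K\rfloor$ equals the optimal knapsack value, and $\mathrm{OPT}\ge KV$ holds iff there is a feasible subset with $\sum_{j\in S}v_j\ge V$. A polynomial-time solver for MIP-CC-D would therefore decide \textsc{Knapsack}, proving NP-hardness; all remaining parameters (the other agents' valuations, the $T_j$, and the standing assumption $\vartheta_j>T_j$) can be padded so that $\mathcal{M}_{CC}$ is well-formed without affecting agent $i'$'s decision.

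The main obstacle I anticipate is precisely this refund interference: one must argue rigorously that letting the agent spread leftover budget over unfunded projects — a continuum of fractional allocations each with strictly positive marginal refund — never perturbs the combinatorial optimum by more than the additive slack $\rho<K$, and in particular that the refund-maximizing use of budget slack is always dominated at the value-granularity level by the knapsack value. Keeping the argument agnostic to the closed form of $R$ is what forces the bound to come solely from $\sum_i r_{ij}=B_j$ and nonnegativity, rather than from, say, the PPR formula. Finally, the corollary (Corollary~\ref{corr:nphard}) would follow by specializing the construction to a concrete Condition~\ref{cond1} scheme such as PPR, where $\sum_i r_{ij}=B_j$ is exact.
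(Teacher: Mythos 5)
Your proposal is correct, but it follows a genuinely different route from the paper's proof. The paper proceeds in two stages: Part~A introduces an intermediate MIP for a \emph{specific} refund-scheme family satisfying the additivity property $\sum_j R(x_j,\cdot) = R(\sum_j x_j,\cdot)$ (e.g., a linear scheme), under which the refund term collapses exactly --- $R(\gamma - \sum_j z_j r_j,\cdot) = R(\gamma,\cdot) - \sum_j z_j R(r_j,\cdot)$ --- so KNAPSACK values can be encoded \emph{exactly} via $\theta_j - r_j - R(r_j) = s_j$ with no error term; Part~B then shows that this intermediate MIP reduces to MIP-CC-D, including a characterization of the optimal contribution vector $\hat{x}^z$ for each funding pattern $z$. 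You instead reduce KNAPSACK to MIP-CC-D in one shot and stay agnostic to the functional form of $R$: you bound the refund interference using only $0 \leq r_{i'j}$ and $\sum_{i} r_{ij} = B_j$ (hence $r_{i'j} \leq B_j$), and you suppress it by scaling all item values by $K > \sum_j B_j$ so the refund contributes an additive slack $\rho < K$ that floor-division removes. Your gap argument is sound: for any strategy with funded set $S$, the objective lies in $\bigl[K\sum_{j\in S} v_j,\ K\sum_{j\in S} v_j + \sum_j B_j\bigr]$, so $\lfloor \mathrm{OPT}/K \rfloor$ recovers the knapsack optimum, and since you construct the instance you may take the $B_j$ small so all numbers stay polynomial in size. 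What each approach buys: the paper's construction is exact (no scaling) and yields the stronger structural statement that MIP and MIP-CC-D optima coincide for the additive scheme, but it only instantiates the theorem's ``for any $(R_j)$ satisfying Condition~\ref{cond1}'' quantifier with one particular scheme; your argument actually proves the stronger literal reading --- hardness for \emph{every} fixed refund scheme satisfying Condition~\ref{cond1} together with the paper's standing assumptions --- at the cost of a coarser, approximation-style analysis that forgoes the exact correspondence between the two programs.
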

\begin{proof}
We divide the proof into two parts. In Part A, we design a MIP tuned for a specific case of combinatorial CC comprising identical projects with a refund scheme satisfying $\sum_j R(x_j,\cdot) = R(\sum_j x_j, \cdot)$. We prove that the MIP is NP-Hard by reducing it from KNAPSACK. Then, in Part B, we show that this MIP reduces to MIP-CC-D. That is, any solution to MIP-CC-D can be used to determine a solution to MIP in polynomial time, implying that MIP-CC-D is also NP-Hard. The formal proof is available in Appendix~\ref{app:nphard}.  
\end{proof}


The following corollary follows from Theorem~\ref{thm::nphard}. We defer the proof to Appendix~\ref{app:cor}.

\begin{corollary}\label{corr:nphard}
Given an instance of $\mathcal{M}_{CC}$ with discrete contributions and for any set of $(R_j)_{j\in P}$ satisfying Condition \ref{cond1}, if all agents except $i^\prime$, $N\setminus \{i^\prime\}$, follow a specific strategy that funds $P^\star\subset P$, then computing the optimal deviation for agent $i^\prime$ is NP-Hard.
\end{corollary}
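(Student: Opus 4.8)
The plan is to derive Corollary~\ref{corr:nphard} as a direct consequence of Theorem~\ref{thm::nphard} by observing that the two problems are essentially the same optimization, only wrapped in a different framing. Theorem~\ref{thm::nphard} already establishes that computing agent $i^\prime$'s optimal strategy given arbitrary contributions $C_{j-i^\prime}$ of the remaining agents is NP-Hard. The corollary restricts attention to the special case where $N\setminus\{i^\prime\}$ play a \emph{specific} profile that funds $P^\star$. So the main task is to show this restriction does not make the problem easier, i.e., that the hard instances from Theorem~\ref{thm::nphard} can be realized (or embedded) as instances in which the other agents' aggregate contributions correspond to a profile funding $P^\star$.

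First I would recall, from the proof of Theorem~\ref{thm::nphard} (Appendix~\ref{app:nphard}), exactly which quantities the hardness depends on: the residual targets $T_j - C_{j-i^\prime}$, the budget $\gamma_{i^\prime}$, the valuations $\theta_{i^\prime j}$, and the refund schemes $R_j$. The KNAPSACK reduction fixes these so that deciding agent $i^\prime$'s best response encodes a KNAPSACK instance. The key point for the corollary is that the residual targets $T_j - C_{j-i^\prime}$ are the \emph{only} way the other agents enter agent $i^\prime$'s optimization; the identities and internal split of $N\setminus\{i^\prime\}$'s contributions are irrelevant to $i^\prime$'s best-response computation. Hence I would argue that for any residual-target vector arising in the Theorem~\ref{thm::nphard} reduction, one can choose an explicit contribution profile for $N\setminus\{i^\prime\}$ that (a) realizes those residual targets, and (b) funds $P^\star$.

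The step I expect to be the main obstacle is simultaneously meeting (a) and (b): the residual targets dictated by the hardness reduction must be \emph{consistent} with a profile of the other agents that actually funds the welfare-optimal set $P^\star$. Concretely, for projects $j\in P^\star$ we need $N\setminus\{i^\prime\}$ to supply $C_{j-i^\prime}$ close enough to $T_j$ (or exactly $T_j$) so that $P^\star$ is funded by them, while for projects outside $P^\star$ the residual targets are left in the configuration the reduction requires. I would handle this by noting that since $P^\star$ is funded in the reference profile, the other agents already contribute $C_{j-i^\prime}\geq T_j - x^{\ast}_{i^\prime j}$ on projects in $P^\star$; by absorbing agent $i^\prime$'s intended contribution into the other agents and leaving the off-$P^\star$ residuals untouched, the induced residual-target vector seen by $i^\prime$ matches a Theorem~\ref{thm::nphard} hard instance. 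One must verify the budget and Subset Feasibility conditions remain satisfiable, but these are slack constraints under the construction.

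Therefore the reduction is: given the NP-Hard best-response instance from Theorem~\ref{thm::nphard}, produce in polynomial time a profile for $N\setminus\{i^\prime\}$ that funds $P^\star$ and induces the same residuals, so that computing $i^\prime$'s optimal deviation solves the original instance. Since the transformation is polynomial and the underlying decision problem is NP-Hard, computing the optimal deviation when $N\setminus\{i^\prime\}$ funds $P^\star$ is NP-Hard as well, establishing Corollary~\ref{corr:nphard}.
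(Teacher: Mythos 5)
Your proposal is correct and takes essentially the same route as the paper: the paper's proof of Corollary~\ref{corr:nphard} simply observes that, for any fixed contributions of $N\setminus\{i^\prime\}$, agent $i^\prime$'s optimal deviation is exactly an instance of MIP-CC-D, so NP-Hardness follows directly from Theorem~\ref{thm::nphard}. If anything, you are more careful than the paper's two-line argument, which never verifies (as you do) that the hard residual-target vectors from the Theorem~\ref{thm::nphard} reduction can actually be realized by a contribution profile of $N\setminus\{i^\prime\}$ consistent with funding $P^\star$.
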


\section{Experiments}
\label{sec:expt}

%

\smallskip\noindent\textbf{Motivation.} Theorem~\ref{thm:bf_bd} proves that the optimal subset $P^\star$ may not be funded at equilibrium due to agents' strategic deviations. However, computing an agent's optimal deviation is also NP-Hard~(Corollary~\ref{corr:nphard}). These observations highlight that computing closed-form equilibrium strategies in Budget Deficit Combinatorial CC, similar to Theorem~\ref{thm:ppr} and Theorem~\ref{thm:bf_bs}, for agents is challenging. Given this challenge and the hardness of strategic deviations, agents may employ heuristics to increase utility~\cite{zou2010tolerable,lubin2012approximate}. We next propose five heuristics for agents to employ in practice and study their impact on agent utilities and the welfare generated.

\subsection{Heuristics and Performance Measures}

\paragraph{Heuristics.} Given the conflict between agent utilities and $P^\star$'s funding (Theorem~\ref{thm:bf_bd}), we propose the following heuristics for agent $i\in N$, for each project $j\in P$, to employ in practice and observe their utility vs. welfare trade-off.

\begin{enumerate}[leftmargin=*,noitemsep]
    \item \emph{Symmetric}: $x_{ij}=\min(\theta_{ij},\gamma_i/m)$
    \item \emph{Weighted}: $x_{ij}=\left(\frac{\theta_{ij}}{\sum_{k\in P} \theta_{ik}}\right)\gamma_i$
    \item \emph{Greedy-$\theta$}: Greedily contribute $x_{ij} =\bar{x}_{ij}$ in descending order of the projects sorted by $\theta_{ij}, \forall j$
    \item \emph{Greedy-$\vartheta$}:  Greedily contribute $x_{ij} =\bar{x}_{ij}^*$ in descending order of the projects sorted by $\frac{\vartheta_j}{T_j}, \forall j$
    \item \emph{OptWelfare}: $x_{ij} =\bar{x}_{ij}$, $\forall j\in P^\star$ and evenly distribute the remaining budget across $P\setminus P^\star$
\end{enumerate}


\noindent Agents contribute the minimum amount of what is specified by the five heuristics and the amount left to fund the project. We consider OptWelfare as the \textit{baseline} (preferred) heuristic since it generates optimal welfare, i.e., funds $P^\star$.

\paragraph{Performance Measures.} To study the welfare vs. agent utility trade-off, we consider the following performance measures: (i) \emph{Normalized Social Welfare} (SW$_N$) --  Ratio of the welfare obtained and the welfare from $P^\star$ and (ii) \emph{Normalized Agent Utility} (AU$_N$) -- Ratio of the agent utility obtained w.r.t. to the utility when each agent has enough budget to play its PPR contribution $\forall~j\in P$ (see Theorem~\ref{thm:ppr}).

We compare the heuristics when $\alpha\in (0,1]$ fraction of the total agents \emph{deviate}, i.e., choose heuristic $\in$ \{Symmetric, Weighted, Greedy-$\theta$, Greedy-$\vartheta$\}. The remaining $1-\alpha$ fraction of agents use the baseline OptWelfare.


%
\begin{figure}[t]
\centering
\includegraphics[width=\columnwidth]{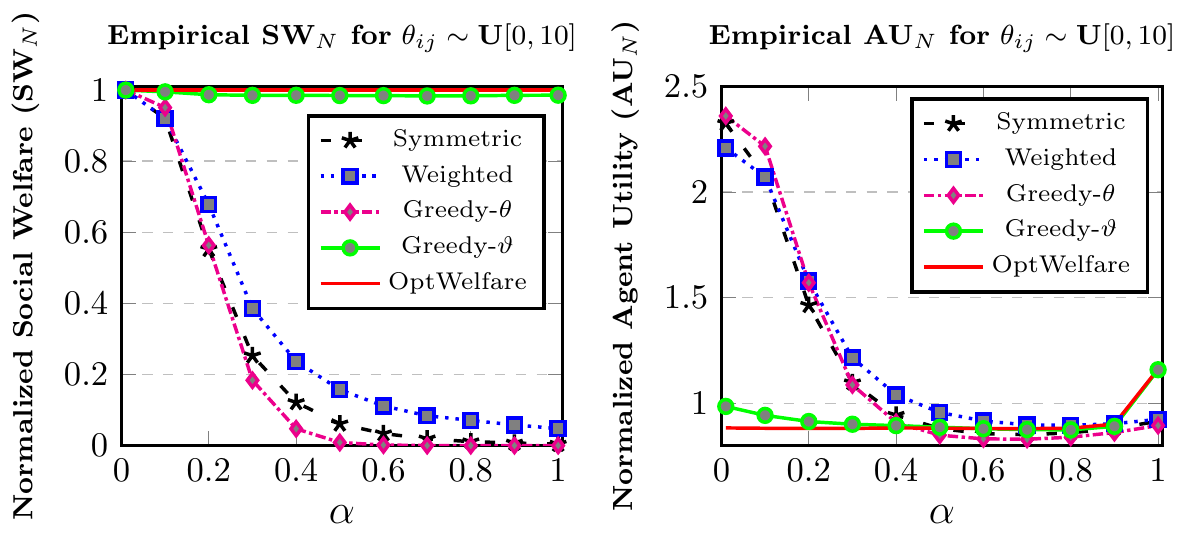}
\caption{Empirical SW$_N$ and AU$_N$ for $\theta_{ij} \sim \mathbf{U}[0,10]$\label{fig:nsw}}
\end{figure}

\begin{figure}[t]
\centering
\includegraphics[width=0.9\columnwidth]{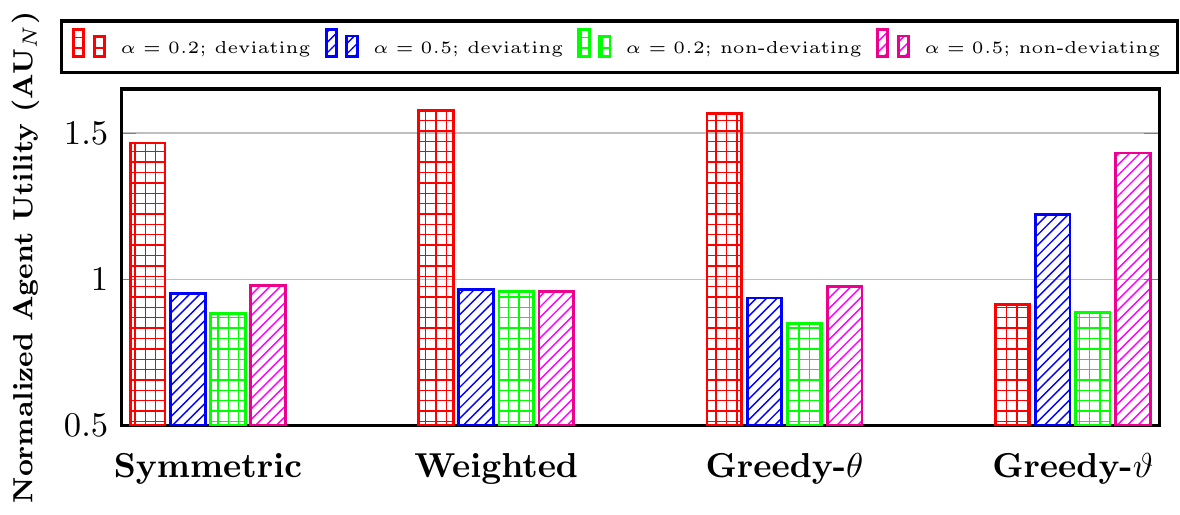}
\caption{AU$_N$ for $\alpha$ fraction of Agents Deviating vs. $1-\alpha$ fraction Non-deviating \label{fig:bar}}
\end{figure}


\subsection{Simulation Setup and Results}

\paragraph{Setup.} We simulate the combinatorial CC game with $n=100$, and $p=10$ and PPR refund scheme (Eq.~\ref{eqn::ppr}).\footnote{The experimental trends presented remain same for different $(n,p)$ pairs, such as $(50,10)$, $(500,10)$, and $(500,20)$.} We sample $\theta_{ij}$s, for each $i\in N$ and $j\in P$, using (i) Uniform Distribution, i.e.,  $\theta_{ij} \sim \mathbf{U}[0,10]$, and (ii) Exponential Distribution, i.e., $\theta_{ij}\sim$ Exp($\lambda=1.5$). Here, $\lambda$ is the rate parameter. When $\theta\sim \mathbf{U}[0,10]$, we get agents whose per-project valuations differ significantly. For $\theta\sim \mbox{Exp}(\lambda=1.5$), the agents have approximately similar per-project valuations.

 We ensure $\vartheta_j > T_j$ and $B_j \in (0, \vartheta_j - T_j]$ for each project $j$ and that the properties Budget Deficit and Subset Feasibility for $P^\star$ are satisfied.   We run each simulation across $100$k instances and  observe the average SW$_N$ and AU$_N$ for each of the five heuristics. We depict our observations with Figures~\ref{fig:nsw} and \ref{fig:bar} when $\theta_{ij} \sim \mathbf{U}[0,10]$. The results for $\theta_{ij}\sim$ Exp($\lambda=1.5$) are presented in Appendix~\ref{app::exp}.

\paragraph{Average SW$_N$ and AU$_N$.} Figure~\ref{fig:nsw} depicts the results when $\theta_{ij} \sim \mathbf{U}[0,10]$. We make three main observations. First, deviating from the baseline heuristic (OptWelfare) is helpful only when few agents deviate, i.e., for smaller values of $\alpha$. Despite such a deviation, we observe that the corresponding decrease in social welfare is marginal. On the other hand, the increase in $\alpha$ reduces the amount of the contributions, and the projects remain unfunded, reducing the social welfare and the agent utilities. Second, deviating from OptWelfare always increases the average agent utility -- at the cost to the overall welfare. Third, Greedy-$\vartheta$ almost mimics OptWelfare, for both SW$_N$ and AU$_N$.


\paragraph{AU$_N$ for Deviating vs. Non-deviating Agents.}
In Figure \ref{fig:bar}, we compare the average utility for the agents who deviate versus those who do not. We let $\alpha=0.2$ fraction of the agents deviate and follow the other four heuristics. From Figure \ref{fig:bar}, we observe that upon deviating to Symmetric, Weighted or Greedy-$\theta$, the $\alpha=0.2$ fraction of agents obtain higher AU$_N$ (red grid bars) compared to the remaining non-deviating agents who do not deviate (green grid bars). In contrast, Greedy-$\vartheta$ shows non-deviation to be beneficial. Since Greedy-$\vartheta$ performs close to OptWelfare, the AU$_N$ for deviation remains low compared to OptWelfare. Crucially, the deviation is not majorly helpful when many agents deviate. When $\alpha = 0.5$, we see comparable average AU$_N$ for agents who deviate and those who do not (blue lined vs. magenta lined bars, respectively). While deviating to Greedy-$\vartheta$ remains non-beneficial.


\paragraph{Discussion and Future Work.} From Figures~\ref{fig:nsw} and \ref{fig:bar}, we see that Greedy-$\vartheta$ performs similar to OptWelfare (which funds $P^\star$). Thus, as the number of projects $p$ increases, to maximize social welfare, it may be beneficial for the agents to adopt Greedy-$\vartheta$ instead of deriving sophisticated strategies based on $P^\star$ (since computing $P^\star$ is NP-Hard).

Generally, it is challenging to determine PSNE contributions for combinatorial CC with budgeted agents. We propose four heuristics and study their welfare and agent utility trade-off. Future work can explore other heuristics that achieve better trade-offs and welfare guarantees. One can also study strategies that perform better on average such as Bayesian Nash Equilibrium. From the experiments, we observe that deviating from OptWelfare may increase agent utility. Thus, one can explore strategies such as $\epsilon$-Nash Equilibrium, which approximates a worst-case $\epsilon$ increase in utility with unilateral deviation. Approximate strategies may also be desirable since finding optimal deviation is NP-Hard. However, the approximation must provide a desirable trade-off between agent utility and welfare.

\section{Conclusion}
\label{sec:conc}

This paper focuses on the funding guarantees of the projects in combinatorial CC. Based on the overall budget, we categorize combinatorial CC into (i) Budget Surplus and (ii) Budget Deficit. First, we prove that Budget Surplus is insufficient to guarantee projects' funding at equilibrium. Introducing the stronger criteria of Subset Feasibility guarantees the projects' funding at equilibrium under Budget Surplus. However, for Budget Deficit, we prove that the optimal welfare subset's funding can not be guaranteed at equilibrium despite Subset Feasibility. Next, we show that computing an agent's optimal strategy (and consequently, its optimal deviation), given the contributions of the other agents, is NP-Hard. Lastly, we propose specific heuristics and observe the empirical trade-off between agent utility and social welfare.

\newpage
\bibliographystyle{unsrtnat}  
\bibliography{references}


\appendix

\newpage
\section{Proofs}

This section presents the formal proofs for the results covered in the main paper.

\subsection{Proof of Theorem 2\label{app::thm2}}
\begin{Theorem}
Given $(R_j)_{j\in P}$ which satisfy Condition~\ref{cond1}, there are Budget Surplus (Def.~\ref{def:bs}) game instances of $\mathcal{M}_{CC}$ with $B_j=\vartheta_j - T_j,\forall j$ such that there is no equilibrium. That is, the set of equilibrium contributions may be empty.
\end{Theorem}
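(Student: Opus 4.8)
The plan is to prove the statement by exhibiting an explicit family of counterexample instances and showing that no pure-strategy profile can be an equilibrium. First I would fix any refund scheme $(R_j)_{j\in P}$ satisfying Condition~\ref{cond1} and set $B_j=\vartheta_j-T_j$ for every $j$, so that by Eq.~\ref{eqn::equal} the crossover point $\bar{x}_{ij}$ (where $\sigma_{ij}^F=\sigma_{ij}^U$) satisfies $\sum_{i\in N}\bar{x}_{ij}=T_j$. I would then partition the agents into a low-budget group $N_1$ with $\sum_{i\in N_1}\gamma_i<\min_j T_j$ and a complementary group $N_2=N\setminus N_1$ whose budgets are chosen so that Budget Surplus (Def.~\ref{def:bs}) holds overall, yet Subset Feasibility fails: at least one agent in $N_2$ lacks the budget to play its natural share $\bar{x}_{ij}$ simultaneously across the projects it must cover. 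This is exactly the regime where $SF_P$ breaks while BS survives, since Claim~\ref{claim1} gives only $SF_P\Rightarrow$ BS and not the converse.

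The geometric fact I would rely on throughout is the one depicted in Figure~\ref{fig:my_proof}: by Condition~\ref{cond1}, $\sigma_{ij}^U=R_j(x_{ij},\cdot)$ is strictly increasing in $x_{ij}$ while $\sigma_{ij}^F=\theta_{ij}-x_{ij}$ is strictly decreasing, the two cross exactly at $\bar{x}_{ij}$, and hence $\sigma_{ij}^U>\sigma_{ij}^F$ for every $x_{ij}>\bar{x}_{ij}$. The crux of the argument is a counting step: because $\sum_{i\in N_1}\gamma_i<\min_j T_j$, the group $N_1$ can cover only a small fraction of any target, so in any profile that funds a given project $j$ the agents of $N_2$ must jointly supply almost all of $T_j$; by choosing the budgets in $N_2$ tightly I would force some agent $i\in N_2$ and some funded project $j$ whose required contribution satisfies $\hat{x}_{ij}>\bar{x}_{ij}$.

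Given such a pivotal agent, I would complete the proof by a local deviation argument exploiting the discontinuity of $\sigma_{ij}(\cdot)$ at the funding threshold. Holding the others' contributions fixed, agent $i$'s utility on project $j$ equals the increasing branch $R_j(x_{ij},\cdot)$ for contributions below the amount $\hat{x}_{ij}$ needed to fund $j$, and drops to $\theta_{ij}-\hat{x}_{ij}$ at the threshold; since $\hat{x}_{ij}>\bar{x}_{ij}$ we have $R_j(\hat{x}_{ij},\cdot)>\theta_{ij}-\hat{x}_{ij}$. Thus the agent strictly prefers to withdraw to $\hat{x}_{ij}-\epsilon$, leaving $j$ unfunded but collecting a larger refund, and the supremum of its payoff is approached as $\epsilon\downarrow 0$ but never attained, so no best response exists. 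This rules out every profile that funds the critical project; I would then close the remaining cases by noting that in any unfunded state Budget Surplus leaves some agent with slack budget who strictly gains by raising its contribution toward the same threshold, again with no attained maximizer. Either way no profile survives, so the equilibrium set is empty.

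The main obstacle I anticipate is the counting/construction step that must guarantee a forced over-contribution $\hat{x}_{ij}>\bar{x}_{ij}$ robustly, i.e.\ for every candidate funded profile and not merely for the ``natural'' one, while simultaneously respecting Budget Surplus. Handling partially-funded profiles, where some projects are funded and some are not, requires particular care, since the budget constraint $\sum_{j\in P}x_{ij}\le\gamma_i$ couples an agent's contributions across projects and complicates the per-project deviation argument. The discontinuity-based non-existence of a best response is conceptually clean, but making the over-contribution unavoidable under these coupled budget constraints is the delicate part of the proof.
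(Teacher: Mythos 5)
Your proposal follows essentially the same route as the paper's own proof: the same partition into a low-budget group $N_1$ with $\sum_{i\in N_1}\gamma_i<\min_j T_j$ and a complementary group $N_2$ that must cover the shortfall, the same crossing-point fact that $\sigma_{ij}^U>\sigma_{ij}^F$ for $x_{ij}>\bar{x}_{ij}$, the same forced over-contribution $\hat{x}_{ij}>\bar{x}_{ij}$ for some agent in $N_2$, and the same discontinuity argument that the would-be best response $\hat{x}_{ij}-\epsilon$, $\epsilon\downarrow 0$, is never attained. If anything, you are slightly more careful than the paper, which leaves the unfunded and partially-funded profiles (the cases you flag as delicate) implicit in its construction.
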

\begin{proof} Consider the set of $P$ projects and $N$ agents s.t.  Def.~\ref{def:bs} is satisfied, i.e., $\sum_{i\in N} \gamma_i \ge \sum_{j\in P} T_j$. 
Denote $N_1\subset N$ as the set of agents whose overall budget is such that even if they contribute their entire budget none of the project gets funded.\footnote{E.g., There exists an agent with budget less than $\min_j T_j$. Also note that, we do not assume $N_1$ to be unique. } This implies $\sum_{i\in N_1} \gamma_i < \min_j T_j$. The remaining set of agents $N_2 = N\setminus N_1$ must have enough budget to fund all the projects (from Def.~\ref{def:bs}).

Note that, each agent $i$ receives  a funded utility $\sigma_{ij}^F = \theta_{ij} - x_{ij}$ for contributing $x_{ij}$ towards project $j$. That is, the funded utility decreases with the increase in contribution. 
The agent may also receive an unfunded utility of $r_{ij} = R_j(x_{ij},B_j,\cdot)$ for project $j$. Since $R_j$ is monotonically increasing (Condition \ref{cond1}), increasing the contribution increases the refund. 
We depict this scenario with Figure \ref{fig:my_proof}. Observe that the plot between funded and unfunded utility w.r.t. contributions meet at the upper-bound of the equilibrium contribution $\bar{x}_{ij}$ (as defined in Theorem~\ref{thm:ppr}), where funded utility is equal to the unfunded. For any contribution $x_{ij} > \bar{x}_{ij}$, the agent receives a greater unfunded utility. 

Consider a scenario where the agents in $N_1$ do not have sufficient budget and thus contribute less than $\bar{x}_{ij}$. Let $C^\prime = \sum_{i\in N_1} \left(\sum_{j\in P} x_{ij}\right)$. Now, we have $x_r = \sum_{j\in P} T_j - C^\prime$ as the remaining amount to fund the set $P$. In order to fund $P$, the remaining agents in $N_2$ may need to contribute $\hat{x}_{ij} > \bar{x}_{ij}$ where $i \in N_2$ (refer Figure~\ref{fig:my_proof}). At $\hat{x}_{ij}$, the unfunded utility is more. Hence, the agents contribute $x_{ij} = \bar{x}_{ij} - \epsilon$, s.t. $\epsilon\to 0$ and $\epsilon > 0$, i.e., the contributions are not well-defined.  
\end{proof}

\subsection{Proof of Claim 1\label{app::clm1}}
\begin{CLAIM}
Given any $(R_j)_{j\in P}$ whose equilibrium contributions satisfy $\sum_i \bar{x}_{ij} \geq T_j$, we have SF$_P$ $\implies$ BS.
\end{CLAIM}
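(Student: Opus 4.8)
The plan is to chain together three inequalities: the per-agent budget bound supplied by SF$_P$, a reordering of a finite double sum, and the hypothesis that the aggregate equilibrium contribution to each project covers its target cost. First I would invoke the definition of Subset Feasibility for $P$ (Definition~\ref{def:bf}), which asserts that $\gamma_i \geq \sum_{j\in P} \bar{x}_{ij}$ for every agent $i \in N$. Summing this inequality over all $i \in N$ immediately gives $\sum_{i\in N} \gamma_i \geq \sum_{i\in N} \sum_{j\in P} \bar{x}_{ij}$.

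Next I would exchange the order of summation in the double sum — legitimate since both $N$ and $P$ are finite index sets — to rewrite the right-hand side as $\sum_{j\in P}\bigl(\sum_{i\in N}\bar{x}_{ij}\bigr)$. At this point the hypothesis of the claim, namely $\sum_{i\in N} \bar{x}_{ij} \geq T_j$ for each project $j$, applies termwise, so that $\sum_{j\in P}\bigl(\sum_{i\in N}\bar{x}_{ij}\bigr) \geq \sum_{j\in P} T_j$. Stringing these together yields $\sum_{i\in N} \gamma_i \geq \sum_{j\in P} T_j$, which is precisely the Budget Surplus condition (Definition~\ref{def:bs}), and the implication SF$_P \implies$ BS follows.

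There is no genuine obstacle here, consistent with the paper's remark that the claim follows from trivial manipulation: the entire argument is a single monotone chain of inequalities. The only step meriting any comment is the interchange of the two summations after aggregating the per-agent SF$_P$ bound, which is immediate for finite sums. The role of the hypothesis $\sum_{i\in N} \bar{x}_{ij} \geq T_j$ is exactly to bridge from the aggregated equilibrium caps $\bar{x}_{ij}$ back to the target costs $T_j$, and without it the conclusion would not close — so it is this assumption, rather than any subtle estimate, that does the real work.
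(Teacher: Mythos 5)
Your proof is correct and is essentially identical to the paper's own argument: sum the SF$_P$ bound $\gamma_i \geq \sum_{j\in P}\bar{x}_{ij}$ over agents, swap the finite double sum, and apply $\sum_{i\in N}\bar{x}_{ij}\geq T_j$ termwise to obtain $\sum_{i\in N}\gamma_i \geq \sum_{j\in P} T_j$. The paper compresses these steps into a single chained inequality, but the content and structure are the same.
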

\begin{proof} From the definition of SF$_P$, $\gamma_i \geq \sum_{j\in P} \bar{x}_{ij},~\forall i \in N$. As $\sum_i \bar{x}_{ij} \geq T_j$, we have $\sum_i \gamma_i \geq \sum_{j\in P} \sum_{i} \bar{x}_{ij} \geq \sum_{j\in M} T_j$, i.e., Budget Surplus is also satisfied.  
\end{proof}

\subsection{Proof of Theorem 3\label{app:thm3}}

\begin{Theorem}
Given $\mathcal{M}_{CC}$ and $(R_j)_{j\in P}$ satisfying CM (Condition~\ref{cond1}) such that $SF_{P}$ is satisfied, at equilibrium all the projects are funded, i.e., $C_j = T_j,~\forall j \in P$ if $B_j\leq \vartheta_j-T_j,~\forall j \in P$. Further, the set of PSNEs are:\\ $\left\{(x_{ij}^*)_{j\in P} ~|~ \sigma_{ij}^F(x_{ij}^*;\cdot)\geq \sigma_{ij}^U(x_{ij}^*;\cdot), \forall j\in P,~\forall i\in N\right\}$.
\end{Theorem}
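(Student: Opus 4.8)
The plan is to show that, under $SF_P$, the combinatorial game decouples into $p$ \emph{independent} single-project games, each of which is exactly the setting of Theorem~\ref{thm:ppr} (extended to a general $R_j$ satisfying CM), and then to invoke the single-project funding guarantee project-by-project. The two coupling forces between projects are (i) the additive utility $U_i=\sum_{j\in P}\sigma_{ij}$ and (ii) the shared budget constraint $\sum_{j\in P}x_{ij}\le\gamma_i$. Since (i) is already separable, the whole argument reduces to showing that (ii) never binds in the relevant region, so that agent $i$'s joint maximization over $(x_{ij})_{j\in P}$ splits into $p$ scalar problems.

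First I would argue that at any candidate equilibrium each agent's per-project contribution lies in $[0,\bar{x}_{ij}]$. In each project $j$, contributing $x_{ij}>\bar{x}_{ij}$ is dominated: for $x_{ij}>\bar{x}_{ij}$ the unfunded utility exceeds the funded utility (Figure~\ref{fig:my_proof}, using CM), so if the project is funded the agent strictly prefers to lower its contribution toward $\bar{x}_{ij}$, raising $\sigma_{ij}^F$, and if it is unfunded no positive contribution is rational. Consequently $\sum_{j\in P}x_{ij}\le\sum_{j\in P}\bar{x}_{ij}\le\gamma_i$, where the last inequality is precisely $SF_P$ (Def.~\ref{def:bf}). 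Thus the budget constraint holds with slack throughout the region that contains every best response, so it can be dropped, and agent $i$'s problem separates across $j$.

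Next I would apply the single-project analysis to each project in isolation. For each $j$ we have $\vartheta_j>T_j$ and $B_j\le\vartheta_j-T_j$, so (as in Claim~\ref{claim1}'s hypothesis and the PPR computation around Eq.~\ref{eqn::equal}) $\sum_{i\in N}\bar{x}_{ij}\ge T_j$, i.e. the target $T_j$ is reachable using only contributions capped at $\bar{x}_{ij}$. The single-project equilibrium argument then applies verbatim: any profile with $C_j=T_j$ and $x_{ij}\le\bar{x}_{ij}$ for all $i$ is a PSNE of project $j$, because raising one's contribution only lowers $\sigma_{ij}^F$ while lowering it enough to unfund $j$ yields $\sigma_{ij}^U<\sigma_{ij}^F$ in this range. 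Since the decoupled games are independent, the PSNE set of the combinatorial game is the product of the single-project PSNE sets, which, using that $\sigma_{ij}^F\ge\sigma_{ij}^U\Leftrightarrow x_{ij}\le\bar{x}_{ij}$, is exactly the stated set; in particular $C_j=T_j$ for every $j\in P$.

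The hard part will be making the decoupling rigorous against genuinely combinatorial deviations --- those that shift budget between projects or deliberately unfund one project to over-contribute to another. The argument above handles these only because $SF_P$ guarantees the scalar best responses are jointly feasible with budget to spare: freeing budget by unfunding a project cannot help, since (a) the freed budget has no profitable use (over-contributing beyond $\bar{x}_{ij}$ strictly lowers utility) and (b) unfunding trades $\sigma_{ij}^F$ for the strictly smaller $\sigma_{ij}^U$ in the equilibrium range. Verifying that $SF_P$ indeed rules out every such cross-project deviation, rather than just per-project ones, is the crux, and it is exactly where the ``collapse to simultaneous single projects'' intuition must be discharged.
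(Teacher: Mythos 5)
Your core argument is essentially the paper's own proof. The paper also verifies the stated profiles directly: SF$_P$ makes any profile with $x_{ij}^*\le\bar{x}_{ij}$ for all $j$ and $\sum_i x_{ij}^*=T_j$ budget-feasible; the per-project two-deviation check (lowering $x_{ij}$ unfunds $j$ and yields $R_j(\hat{x}_{ij})<R_j(x_{ij}^*)\le\theta_{ij}-x_{ij}^*$ by Condition~\ref{cond1}, raising $x_{ij}$ only lowers $\sigma_{ij}^F$) shows each $\sigma_{ij}$ is maximized at $x_{ij}^*$ given $x_{-ij}^*$ over \emph{all} $\hat{x}_{ij}\in\mathbb{R}_+$; and the cross-project ``crux'' you worry about is discharged exactly as in the paper, by additivity: each term of $U_i=\sum_j\sigma_{ij}$ depends only on the agent's own contribution to that project, so any joint, budget-shifting deviation is dominated term by term, and the budget constraint only shrinks the deviation space. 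Your points (a) and (b) are the verbal form of that summation step.

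However, one sub-claim in your first step is false: ``if it is unfunded no positive contribution is rational.'' Under Contribution Monotonicity the opposite holds --- the refund $R_j(x_{ij},\cdot)$ is strictly increasing in $x_{ij}$, so while a project remains unfunded, contributing \emph{more} strictly increases utility; this is the entire incentive logic of refund schemes and is precisely what drives the non-existence in Theorem~\ref{thm::noneqm}. The claim is load-bearing in your opening step, where you try to bound contributions at \emph{every} candidate equilibrium so that the budget constraint ``can be dropped.'' The repair is to drop that step entirely: feasibility of the stated profiles follows directly from SF$_P$ (since $\sum_j x_{ij}^*\le\sum_j\bar{x}_{ij}\le\gamma_i$), and the deviation analysis never needed the constraint to be slack, because additivity already separates the objective and a tighter feasible set for deviations can only make the equilibrium inequality easier to satisfy. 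Note also that the completeness direction your step was aiming at --- ruling out equilibria outside the stated set, e.g., profiles where a project is left unfunded while agents over-contribute elsewhere --- is not established by the paper's proof either; it only verifies that the stated profiles are PSNE and fund every project, which is what the corrected version of your argument gives as well.
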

\begin{proof}
For a given refund scheme $R_j$ and corresponding $\bar{x}_{ij}$, each agent $i$'s budget $\gamma_i$ satisfies, using SF$_P$, $\gamma_i \geq \sum_{j\in P}\bar{x}_{ij}$.
Let each agent $i$'s final contribution  $x_{ij}^*\leq \bar{x}_{ij}$; where $\sigma_{ij}^F(\bar{x}_{ij})= \sigma_{ij}^U(\bar{x}_{ij})$.
Since $\sum_{i\in N} \bar{x}_{ij} \geq T_j$, the project gets funded, and each agent obtains a utility of $\sigma_{ij}(x_{ij}^*,x_{-ij}^*;\cdot) = \theta_{ij} - x_{ij}^*$.

W.l.o.g., let us assume that agent $n$ contributes $x_{nj}^* = T_j - \sum_{i \in N\setminus \{n\}} x_{ij}^* \leq \bar{x}_{nj}$ s.t. $\sum_{i\in N} x_{ij}^* = T_j$. 
Then $\forall j\in P$, we can construct the following two deviations for $i$:
\begin{enumerate}[leftmargin=*,noitemsep]
    \item $\hat{x}_{ij} < x_{ij}^*$. For any   $\hat{x}_{ij}$ less than $x_{ij}^*$,  the project $j$ is not funded and $R_j(\hat{x}_{ij}) = \sigma_{ij}(\hat{x}_{ij},x_{-ij}^*;\cdot) < \sigma_{ij}(x_{ij}^*,x_{-ij}^*;\cdot)$.
    \item  $\hat{x}_{ij} > x_{ij}^*$. For any   $\hat{x}_{ij} > x_{ij}^*$, the project is funded but the funded utility reduces, i.e., $\sigma_{ij}(\hat{x}_{ij},x_{-ij}^*;\cdot) = \theta_{ij} - \hat{x}_{ij} < \sigma_{ij}(x_{ij}^*,x_{-ij}^*;\cdot) = \theta_{ij} - x_{ij}^*$.
\end{enumerate}

In both the cases, we obtain the following,
$$ \sigma_{ij}(x_{ij}^*,x_{-ij}^*;\cdot) \geq \sigma_{ij}(\hat{x}_{ij}, x_{-ij}^*;\cdot), \forall \hat{x}_{ij}, \forall j.$$
Therefore, 
$$ \sum_{j\in P} \sigma_{ij}(x_{ij}^*,x_{-ij}^*;\cdot) \geq \sum_{j \in P} \sigma_{ij}(\hat{x}_{ij}, x_{-ij}^*;\cdot), \forall \hat{x}_{ij}$$
From Def.~\ref{def:psne}, $x_{ij}^*$ are PSNE and every project gets funded.  
\end{proof}

\subsection{Proof of Theorem 4\label{app:thm4}}

\begin{Theorem} 
Given an instance of $\mathcal{M}_{CC}$, a unique non-trivial $P^\star\subset P$ may not be funded at equilibrium even with Subset Feasibility for $P^\star$, SF$_{P^\star}$, for any set of $(R_j)_{j\in P}$ satisfying Condition \ref{cond1}.
\end{Theorem}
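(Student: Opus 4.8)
The plan is to prove the statement by explicit construction, following the recipe of Procedure~\ref{algo::F3}: build a two-agent, two-project instance of $\mathcal{M}_{CC}$ with $P^\star=\{1\}$ in which, whenever the (unique) contribution profile funding project~$1$ is played, agent~$2$ strictly prefers to redirect its entire budget to project~$2$. Crucially, the construction must work for \emph{any} refund scheme $R$ satisfying Condition~\ref{cond1}, so I would parametrize everything by the threshold $\bar{x}_{ij}$ (the contribution at which $\sigma_{ij}^F=\sigma_{ij}^U$, see Figure~\ref{fig:my_proof}), which is well-defined for every such $R$. First I would fix $R$, choose $T_1\in\mathbb{R}_+$ and a valuation $\theta_{11}$ with $\bar{x}_{11}<T_1<\theta_{11}$ (so agent~$1$ alone cannot fund project~$1$). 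Then I would select $\theta_{21}$ so that $\bar{x}_{21}=T_1-\bar{x}_{11}$; this is possible because under Condition~\ref{cond1} the threshold $\bar{x}_{21}$ is continuous and strictly increasing in $\theta_{21}$ with $\bar{x}_{21}\to 0$ as $\theta_{21}\to 0$, so the positive target $T_1-\bar{x}_{11}$ is attained for a suitable $\theta_{21}$. Next I set $T_2=\bar{x}_{21}$, $\theta_{12}=0$, pick $\theta_{22}$ in the interval $\bigl(\theta_{21},\,\theta_{11}+\theta_{21}-\bar{x}_{11}\bigr)$, and set the budgets $\gamma_1=\bar{x}_{11}$, $\gamma_2=\bar{x}_{21}$.

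Before invoking the deviation, I would discharge four consistency checks. (i) The interval for $\theta_{22}$ is non-empty since $\bar{x}_{11}<\theta_{11}$. (ii) The instance has Budget Deficit and satisfies $SF_{P^\star}=SF_{\{1\}}$, because $\gamma_1+\gamma_2=T_1<T_1+T_2$ while each agent's budget equals exactly its threshold $\bar{x}_{i1}$. (iii) There is sufficient interest, $\vartheta_j>T_j$: project~$1$ uses $\theta_{11}>T_1$, and project~$2$ uses $\theta_{22}>\theta_{21}>\bar{x}_{21}=T_2$. (iv) $P^\star=\{1\}$: the two projects cannot both be funded under Budget Deficit, and comparing the two single-project welfares $\vartheta_1-T_1$ and $\vartheta_2-T_2$ reduces, using $\bar{x}_{21}=T_1-\bar{x}_{11}$, exactly to the imposed upper bound $\theta_{22}<\theta_{11}+\theta_{21}-\bar{x}_{11}$.

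The core argument then observes that, because $\gamma_1=\bar{x}_{11}$ and $\gamma_2=\bar{x}_{21}$ with $\bar{x}_{11}+\bar{x}_{21}=T_1$, the \emph{only} profile that funds project~$1$ forces both agents to commit their entire budgets to project~$1$, so $x_{11}^*=\bar{x}_{11}$ and $x_{21}^*=\bar{x}_{21}$. Fixing agent~$1$'s contribution, I would exhibit the unilateral deviation in which agent~$2$ moves its whole budget to project~$2$: this exactly funds project~$2$ (since $\gamma_2=\bar{x}_{21}=T_2$) and leaves project~$1$ unfunded. Agent~$2$'s utility then rises from $\theta_{21}-\bar{x}_{21}$ (funded project~$1$, zero refund from the unfunded project~$2$ as $R_2(0,\cdot)=0$) to $\theta_{22}-\bar{x}_{21}$, and the gap is strictly positive precisely because of the lower bound $\theta_{22}>\theta_{21}$. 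Since this is the unique funding profile for $P^\star$ and it admits a profitable deviation, no equilibrium funds $P^\star$, establishing the theorem.

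The step I expect to be the main obstacle is making the whole construction go through for an \emph{arbitrary} $R$ satisfying Condition~\ref{cond1} rather than for PPR alone: I must argue that the map $\theta_{21}\mapsto\bar{x}_{21}$ is continuous, increasing, and tunable so that $\bar{x}_{21}=T_1-\bar{x}_{11}$ is achievable, and that the deviation comparison depends on $R$ only through the standard normalization $R(0,\cdot)=0$ together with monotonicity — so that the single strict inequality $\theta_{22}>\theta_{21}$ drives the whole conclusion. Everything else is routine bookkeeping on the welfare comparison and the budget accounting.
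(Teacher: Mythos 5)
Your proposal is correct and follows essentially the same route as the paper's own proof: the same Procedure~\ref{algo::F3} construction with $\gamma_1=\bar{x}_{11}$, $\gamma_2=\bar{x}_{21}=T_2=T_1-\bar{x}_{11}$, the same welfare bookkeeping forcing $P^\star=\{1\}$, and the same profitable unilateral deviation of agent~2 to project~2 driven by $\theta_{22}>\theta_{21}$. Your continuity/monotonicity argument for tuning $\theta_{21}$ so that $\bar{x}_{21}=T_1-\bar{x}_{11}$ matches (and slightly elaborates) the paper's justification, which relies on the same continuity of $\bar{x}$ in $\theta$ together with $R(0,\cdot)=0$ and Condition~\ref{cond1}.
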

\begin{proof} We provide a proof by construction for  $n=2$ and $p=2$ where one of the agents has an incentive to deviate when $P^\star$ is funded. 
Procedure \ref{algo::F3} presents the general steps to construct the instance.
Informally, we first select the target cost of the project $1$, i.e. $T_1$, to be any positive real value. Given $(R_j)_{j\in P}$ under Condition~\ref{cond1}, we can always find an $\bar{x}_{11}$ for agent $1$.
At $\bar{x}_{11}$, funded utility is equal to unfunded utility (refer Figure \ref{fig:my_proof}). Trivially, $\bar{x}_{11} <\theta_{11}$. We now prove that the construction defined in Procedure \ref{algo::F3} is always possible.

    \smallskip
    \noindent\underline{Line 3}. We select $\theta_{11}$ such that $x_{11}^* < T_1 < \theta_{11}$. 
    We are required to prove that such a $\theta_{11}$ will always exist. When $\theta_{11} = T_1$, it must be true that, $x_{11}^* < \theta_{11} = T_1$. If $x_{11}^* = T_1$ then $R(x_{11}^*) = 0$ (i.e., funded utility equals unfunded utility), since $R(0)=0$ by construct, contradicting Condition~\ref{cond1}. Hence $x_{11}^* < T_1$, when $\theta_{11} = T_1$. As $x_{11}^*$ is continuous w.r.t. $\theta_{11}$, we can always chose $\theta_{11} > T_1$ s.t. $x_{11}^* < T_1$.
    
    \smallskip
    \noindent\underline{Line 4}. We choose $\theta_{21}$ s.t., $x_{21}^* = T_1 - x_{11}^*$. It is possible since $T_1 - x_{11}^* > 0$.
    
    \smallskip
    \noindent\underline{Line 5}. We let $T_2 = x_{21}^*$ and $\theta_{12} = 0$. Then, select $\theta_{22} < \theta_{11} + \theta_{21} - x_{11}^*$. Observe that, we can always select $\theta_{22} > \theta_{21}$ since $\theta_{11}  - x_{11}^* >0$.
    
    Given the above, and the values $T_2 = T_1 - x_{11}^*$ and $\theta_{12}=0$; we arrive at $\theta_{11} + \theta_{21} - T_1 > \theta_{12} + \theta_{22} - T_2$. Hence, $P^\star = \{1\}$, i.e., the welfare optimal subset is $\{1\}$ and is unique.
    
    \smallskip
    \noindent\underline{Line 6}. To ensure $SF_{P^\star}$, i.e., Subset Feasibility of $P^\star$, we set $\gamma_1 = x_{11}^*$ and $\gamma_2 = x_{12}^*$. Note that, $\gamma_1 + \gamma_2 = T_1 < T_1 + T_2$, Budget Deficit is also satisfied. Thus, Procedure 1 returns valid instance, say $\mathcal{M}^\prime$, of $\mathcal{M}_{CC}$.
    
    Next, note that, in $\mathcal{M}^\prime$, agent 1 contributes $\bar{x}_{11}$ and agent 2 must contribute $\bar{x}_{21}$ to fund project 1 and to achieve socially efficient equilibrium. If agent 2 does not deviate, it will receive an utility $\sigma_{2} = \theta_{21} - \bar{x}_{21}$. 
    However, let agent 2 unilaterally deviate and contribute $\bar{x}_{21}$ to project 2. As $T_2=\bar{x}_{21}$, project 2 gets funded and agent 2 obtains a utility of $\sigma'_2 = \theta_{22} - \bar{x}_{21}$. 
    From Line 5 in Procedure~\ref{algo::F3}, we know that $\sigma'_2 > \sigma_2$, hence agent 2 will deviate, implying $P^\star = \{1\}$ is  not funded at equilibrium for the $\mathcal{M}^\prime$ game. 
\end{proof}

\subsection{Proof of Theorem 5\label{app::thm5}}

\begin{Theorem}
Given an instance of $\mathcal{M}_{CC}$ and for any set of $(R_j)_{j\in P}$ satisfying Condition \ref{cond1}, an agent $i^\prime$'s optimal strategy may not exist.
\end{Theorem}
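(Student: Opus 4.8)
The plan is to realize the non-existence directly inside the symmetric construction of Example~\ref{ex::2}, exhibiting a one-parameter family of deviations for agent $i^\prime = 2$ whose utilities climb toward a strict supremum that is never attained. First I would fix agent $1$'s contribution so that agent $2$ sits exactly on the funding boundary of project $1$. With three identical projects of target $T$ and bonus $B = \vartheta - T$, the single-project identity gives $\bar{x}_{i1} = \gamma_i$ and $\bar{x}_{11} + \bar{x}_{21} = T$; letting agent $1$ place its whole budget on project $1$ means agent $2$'s budget $\gamma_2$ is precisely the residual $T - C_{1-i^\prime}$ needed to complete project $1$, with nothing to spare for projects $2$ and $3$. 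I would then read off MIP-CC (Figure~\ref{fig::MIP}) for agent $2$ and note that every feasible contribution lands in one of exactly two indicator regimes, $z = \{1,0,0\}$ and $z = \{0,0,0\}$.

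Next I would compare the two regimes. In the regime $z = \{1,0,0\}$ the only funding strategy is $(\gamma_2,0,0)$, giving funded utility $\theta - \gamma_2$ on project $1$ and $R(0)=0$ on the rest, so total utility $\theta - \gamma_2$. In the regime $z = \{0,0,0\}$ I would use the spread strategy $(\gamma_2 - \epsilon,\, \epsilon/2,\, \epsilon/2)$ with $\epsilon > 0$: project $1$ now has $C_1 = T - \epsilon < T$ and is unfunded, while agent $2$ is the \emph{sole} contributor to projects $2$ and $3$. The decisive observation is that a sole contributor to an unfunded project collects the entire bonus, because $\sum_{i} r_{ij} = B_j$ together with $R_j(0)=0$ forces $r_{2j} = B_j$; this holds for \emph{any} $R$ satisfying Condition~\ref{cond1}, not just PPR. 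Hence the spread utility is $2B + R(\gamma_2 - \epsilon)$, and its limiting value $2B + R(\gamma_2)$ strictly exceeds $\theta - \gamma_2$ --- a domination that reduces, under the standing assumption $\vartheta > T$, to a routine inequality rather than a delicate parameter search.

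Finally I would close the argument through the discontinuity, which I expect to be the real heart of the proof. Because $R$ is strictly increasing in its contribution (Condition~\ref{cond1}), the spread utility $2B + R(\gamma_2 - \epsilon)$ is strictly increasing as $\epsilon \downarrow 0$, so every feasible $\epsilon > 0$ is strictly improved by a smaller one; its supremum $2B + R(\gamma_2)$ is therefore never attained on $\epsilon > 0$. At $\epsilon = 0$ the contribution becomes $(\gamma_2,0,0)$, which flips the indicator to $z = \{1,0,0\}$: project $1$ funds and the utility drops discontinuously to $\theta - \gamma_2$, strictly below the supremum. Thus agent $2$'s utility over its continuous feasible set has no maximizer, proving that an optimal strategy need not exist. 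The subtle point to get right is that the funding threshold of project $1$ has been made to coincide exactly with the cap of agent $2$'s budget constraint, so that the only feasible point where the indicator can switch is precisely where the utility jump occurs.
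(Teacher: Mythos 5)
Your proposal is correct and follows essentially the same route as the paper's own proof: the same Example~\ref{ex::2} construction with agent 1 exhausting its budget on project 1, the same two indicator regimes $z=\{1,0,0\}$ and $z=\{0,0,0\}$, the same spread strategy $(\gamma_2-\epsilon,\epsilon/2,\epsilon/2)$ capturing the bonuses $2B$, and the same conclusion that the supremum $2B+R(\gamma_2)$ is unattained because the indicator flips at $\epsilon=0$. Your explicit check that this supremum strictly dominates the funded utility $\theta-\gamma_2$ (via $\theta-\gamma_2 = R(\gamma_2)$ and $B>0$) is a small but welcome tightening of a step the paper only asserts.
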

\begin{proof}
From Example~\ref{ex::2}, let agent 1 contribute its entire budget to project 1, i.e., $T_1-C_1=\bar{x}_{11}$ and $C_2=0$. Now we solve MIP-CC for agent 2. Since $\gamma_i=\bar{x}_{i1},~\forall i$ and $T_j=\bar{x}_{1j}+\bar{x}_{2j},~\forall j$ (see Eq.~\ref{eqn::equal}), the overall budget is enough to fund only a single project. That is, as agent 1 has contributed to project 1, agent 2 can (at best) fund only project 1. So the set of indicator variable $z$s in MIP-CC can be either $z_1 =\{1,0,0\}$ or $z_2 = \{0,0,0\}$.

\smallskip
\noindent\underline{Computing Optimal Strategy}. We have the following cases:

\begin{itemize}[leftmargin=*]
    \item[$\bullet$] $z = \{1,0,0\}$: This is possible when agent 2 contributes $x_{21}=\gamma_2$ to project 1, resulting in project $1$'s funding. The utility agent 2 gets is $\theta - \bar{x}_{21}$.
    \item[$\bullet$] $z = \{0,0,0\}$: Agent 2 may opt to contribute in all the 3 projects in order to grab the maximum refund. As the refund shares from $R_j(\cdot)$ sum exactly to $B_j$, agent 2 can grab the entire bonus from projects 2 and 3 by contributing any arbitrary positive value, $\epsilon >0$. However, agent 2 must contribute $\epsilon$ less in project 1 since, $\gamma_2 = \bar{x}_{21}$. Hence, project 1 becomes unfunded. For optimal strategy, agent 2 must maximize the refund from project 1, i.e., $\underset{x_{21}}{\max} ~R_1(x_{21},\cdot) ~\mbox{s.t.~}  x_{21}< T_1-C_1$. Therefore, $x_{21}$ is not well-defined as $R_1(x_{21},\cdot)$ satisfies CM. So the overall utility by deviating for agent 2 becomes $2B + R(x_{21}-\epsilon,\cdot)$ where an optimal $\epsilon$ is not defined.
\end{itemize}
\end{proof}

\subsection{Proof of Theorem 6 \label{app:nphard}}
\begin{Theorem}
Given an instance of $\mathcal{M}_{CC}$ with discrete contributions and for any set of $(R_j)_{j\in P}$ satisfying Condition \ref{cond1}, computing optimal strategy for agent $i^\prime$, given the contributions of $N\setminus \{i^\prime\}$,  is NP-Hard.
\end{Theorem}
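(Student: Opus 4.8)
The plan is to establish the \emph{universal} form of the statement — NP-hardness for \emph{every} refund scheme satisfying Condition~\ref{cond1}, not merely for one convenient choice — via a reduction from $0/1$ KNAPSACK whose correctness never depends on the particular form of $R$. The only properties of $R$ I will use are those shared by all admissible schemes: $R(0,\cdot)=0$ and strict monotonicity in the contribution (Condition~\ref{cond1}), together with the bound $r_{i^\prime j}=R(x_{i^\prime j},\cdot)\le B_j$, which holds because the shares are nonnegative and sum to the bonus ($\sum_i r_{ij}=B_j$). Accordingly, I fix an \emph{arbitrary} $R$ satisfying Condition~\ref{cond1}; every other parameter of the instance is mine to design.

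Given a KNAPSACK instance with integer weights $w_1,\dots,w_p$, integer values $v_1,\dots,v_p$, capacity $W$, and target $V$, I construct a MIP-CC-D instance with $p$ projects and focal agent $i^\prime$. I fix the contributions of $N\setminus\{i^\prime\}$ so that the amount left on project $j$ is exactly $w_j$, i.e.\ $T_j-C_{j-i^\prime}=w_j$; I set $\theta_{i^\prime j}=v_j+w_j$ and $\gamma_{i^\prime}=W$; and I take the discretization unit $\delta=1$, so each $w_j$ and $W$ is a legal contribution. Crucially, I pick a uniformly tiny bonus $B_j=\tfrac{1}{2p}$ on every project (choosing the remaining agents' valuations large enough that $\vartheta_j>T_j$ and $B_j\le\vartheta_j-T_j$). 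Under the ``Remaining Contribution'' cap $x_{i^\prime j}\le T_j-C_{j-i^\prime}=w_j$ of Figure~\ref{fig::MIP}, funding project $j$ forces $x_{i^\prime j}=w_j$ exactly, yielding funded utility $\theta_{i^\prime j}-w_j=v_j$, while any unfunded project contributes refund at most $B_j$.

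The heart of the argument is a decoupling claim. For any strategy of $i^\prime$ inducing funded set $S$, the objective of Definition~\ref{def:au} equals $\sum_{j\in S}v_j+\sum_{j\notin S}r_{i^\prime j}$; the second term is at most $\sum_j B_j=\tfrac12<1$, the first is a nonnegative integer, and the budget constraint reads exactly $\sum_{j\in S}w_j\le W$. Since the $v_j$ are integers and the total refund is strictly below $1$, no configuration of refunds can alter which funded set is optimal; an optimal strategy therefore funds precisely a value-maximizing feasible subset. Consequently $i^\prime$ can attain utility $\ge V$ iff KNAPSACK admits a subset of value $\ge V$ within capacity $W$. An algorithm that computes $i^\prime$'s optimal strategy thus decides KNAPSACK, so MIP-CC-D is NP-Hard; as $R$ was an arbitrary Condition~\ref{cond1} scheme, the hardness holds for \emph{every} such scheme.

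The step I expect to be the main obstacle is exactly making the combinatorial core oblivious to the unknown $R$: for a fixed scheme one could compute refunds explicitly, but here $R$ is known only to be increasing and bounded by $B_j$. I neutralize this by the sub-unit scaling $\sum_j B_j<1$, which confines every refund effect below the integer granularity of the knapsack values, so the reduction's correctness is independent of the form of $R$. A secondary point I would verify carefully is that the ``Remaining Contribution'' cap indeed forces $x_{i^\prime j}=w_j$ on funded projects (so weights act as knapsack weights) and that the discretization $\delta=1$ guarantees the optimum is attained, avoiding the non-existence seen in the continuous case.
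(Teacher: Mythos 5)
Your proof is correct, but it takes a genuinely different route from the paper's. The paper proceeds in two parts: it first restricts to a refund scheme that is additive across projects, i.e. $\sum_j R(x_j,\cdot)=R(\sum_j x_j,\cdot)$ (e.g.\ a linear scheme), defines an auxiliary MIP whose objective then telescopes to $\sum_j z_j(\theta_j - r_j - R(r_j))$, reduces KNAPSACK to that MIP by setting $\theta_j - r_j - R(r_j)=s_j$, and only then (Part B) shows via two feasibility/optimality claims that the auxiliary MIP reduces to MIP-CC-D. Because refunds enter the paper's objective exactly, additivity is what lets it fold them into the item values, so its hardness instance is built around one specific admissible scheme. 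You instead reduce KNAPSACK directly to MIP-CC-D for an \emph{arbitrary} scheme satisfying Condition~\ref{cond1}: the Remaining-Contribution cap in Figure~\ref{fig::MIP} pins funded contributions to the weights ($x_{i^\prime j}=w_j$), and the sub-unit bonuses $B_j=\tfrac{1}{2p}$ together with the paper's standing normalization $\sum_i r_{ij}=B_j$ confine the total refund strictly below the integer granularity of the values, so the optimal funded set must be a maximum-value feasible subset regardless of the form of $R$; the utility-$\geq V$ test then decides KNAPSACK by integrality. This buys the literal universal quantifier in the theorem statement (``for any set of $(R_j)$ satisfying Condition~\ref{cond1}''), which the paper's construction establishes only for the additive subclass, and it dispenses with Part B entirely; the paper's approach, in exchange, yields an exact utility-to-value correspondence (refunds accounted for precisely rather than bounded away), which is what its two-part machinery is organized around. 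The two details you flagged are indeed the load-bearing ones and both check out: the cap forces $x_{i^\prime j}=w_j$ on funded projects, and with $\delta=1$ the strategy space is finite so an optimum exists. One small remark: your argument never actually uses strict monotonicity or $R(0,\cdot)=0$, only the bound $r_{i^\prime j}\leq B_j$, so those hypotheses can be dropped from your preamble without loss.
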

\begin{proof}

For completeness, we first re-write MIP-CC-D (see Figure \ref{fig::MIP}) for discrete contributions. Let $D=\{\kappa\cdot\delta~|~\kappa\in\mathbb{N}\}$ denote the discrete contribution set where $\delta$ is the smallest unit of contribution. Now, the following optimization defines MIP-CC-D for agent $i^\prime$ without the sub-script ``$i^\prime$". Note that, Eq.~\ref{eq::opt_mip} is merely the optimization presented in Figure~\ref{fig::MIP} with the added constraint on the contribution set. 

\begin{align}\label{eq::opt_mip}\tag{A1}
\begin{rcases}
     \max_{(x_{ j})_{j\in P}} &  \sum_{j\in P} z_{j} \cdot(\theta_{ j} - x_{ j}) + (1 - z_{ j}) \cdot R(x_{ j},\cdot) \\
 \mbox{\textbf{s.t.}} & \sum_{j\in P} x_{j} \leq \gamma \\
& x_{ j} \leq T_j-C_{j-i^\prime}, \forall j   \\
& (x_{j}-T_j+C_{j-i^\prime})\cdot z_{j} \geq 0,\  \forall j 
\\
& x_{j}-T_j+C_{j-i^\prime} < z_{ j}, \forall j \\
 & z_{ j}\in \{0,1\}, x_j \in D , \forall j
\end{rcases}
\end{align}

We first construct an optimization problem MIP, which we show is equivalent to the KNAPSACK problem in Part A. In Part B, we show that the MIP reduces to MIP-CC-D implying MIP-CC-D is also NP-Hard.

\subsubsection{Part A: Designing MIP (Eq.~\ref{eqn::newMIP})}
For our proof, we consider a refund scheme $R_j(\cdot)=R(\cdot)$ and bonus budget $B_j=B$ which is the same for each project $j$. $R(\cdot)$ satisfies CM (Condition~\ref{cond1}) such that the refund share satisfies $\sum_j R(x_j,\cdot) = R(\sum_j x_j, \cdot)$.
    \footnote{An example of such a refund scheme can be $R(\cdot)=\frac{x\cdot B_{\min}}{\vartheta_{\max}}$ where $B_{\min}=\min_j (\vartheta_j - T_j) $ and $\vartheta_{\max} =\max_j \vartheta_j$.}
Now we define the MIP as follows. 

\begin{align}\label{eqn::newMIP}\tag{A2}
\begin{rcases}
    & \max_{z=(z_1,\ldots,z_p)}\sum_{j\in P}  z_j (\theta_{j} - r_{j}) + R(\gamma-\sum_{j\in P} z_j r_j,\cdot)\\
    \textbf{s.t.:} & \sum_j z_j r_j \leq \gamma \mbox{~and~}  z_j \in \{0,1\}
\end{rcases}
\end{align}


We now prove that the MIP defined in Eq. \ref{eqn::newMIP} is NP-Hard. The MIP can be re-written as, 
\begin{align*}
\max_{z=(z_1,\ldots,z_p)} & \sum_{j\in P}  z_j (\theta_{j} - r_{j}) + R(\gamma - \sum_{j\in P} z_j r_j,\cdot) \\
=\max_{z=(z_1,\ldots,z_p)} & \sum_{j\in P}  z_j (\theta_{j} - r_{j}) - \sum_{j \in P} z_j R(r_j) + R(\gamma) \\
& \quad \left( \text{By}\ \sum_j R(x_j,\cdot) = R(\sum_j x_j, \cdot),\  z_j \in\{0,1\}. \right) \\
=\max_{z=(z_1,\ldots,z_p)} & \sum_{j\in P}  z_j (\theta_{j} - r_{j} - R(r_j)) + R(\gamma) \\
=\max_{z=(z_1,\ldots,z_p)} & \sum_{j\in P}  z_j (\theta_{j} - r_{j} - R(r_j))
\end{align*}

We reduce the MIP problem from the NP-complete $\operatorname{KNAPSACK}$ problem: given $m$ items with weights $w_1,\cdots, w_m$ and value $s_1,\cdots, s_m$, capacity $B$ and value $V$, does there exist a subset $Q\subseteq \{1,\cdots,m\}$ such that $\sum_{j\in Q} w_j \leq B$ and $\sum_{j\in Q} s_j \geq V$? Given an instance of the  $\operatorname{KNAPSACK}$ problem, we build an instance of the above MIP as follows:
\begin{itemize}
    \item[$\bullet$] The set of projects is the set of items. The amount left for funding the project is $r_j=w_j$. The budget of the agent $i^\prime$ is $\gamma = B$. 
    \item[$\bullet$] The value of each project to the agent is, $\theta_j - r_j - R(r_j)= s_j$, i.e., $\theta_j - R(r_j) = s_j + w_j$.
\end{itemize}
We can see that the utility obtained by choosing a set of projects $Q = \{j | z_j = 1\}$ is exactly equal to the value of choosing set of items $Q$ in the KNAPSACK problem i.e, $\sum_{j\in Q}\theta_j-r_j-R(r_j)=\sum_{j\in Q} s_j $. Also note that the budget constraint is satisfied if and only if the capacity constraint is satisfied. It follows that a solution with value at least $V$ exists in the KNAPSACK problem if and only if there exists a set of projects whose social welfare in the above instance is at least $V$. 

Thus, we reduce MIP from KNAPSACK implying MIP is NP-Hard.
\end{proof}

We next show that MIP in Eq.~\ref{eqn::newMIP} also reduces to MIP-CC-D.

\subsubsection{Part B: Reducing MIP to MIP-CC-D}
Before we show the reduction, we define the following handy notations for the objectives of MIP-CC-D and MIP.
\begin{align*}
    F(z, x) &= \sum_{j\in P} z_{j} \cdot(\theta_{j} - x_{j})  +  \sum_{j\in P} (1 - z_{j}) \cdot R(x_{j},\cdot) && \textcolor{blue}{\mbox{``MIP-CC-D''}} \\
    G(z) &= \sum_{j\in P} z_{j} \cdot(\theta_{j} - r_{j})  +  R(\gamma -\sum_{j\in P} z_j r_j,\cdot) && \textcolor{blue}{\mbox{``MIP in Eq.~\ref{eqn::newMIP}''}}
\end{align*}


Further let, $(Z^F, X^F)$ denote the tuple of feasible solutions to MIP-CC-D and $Z^G$ denote the set of feasible solutions to MIP. 

Given an instance of MIP we construct an instance of MIP-CC-D using the following conditions,

\begin{itemize}
    \item[P1.] For each project $j$, we have $R_j(\cdot)=R(\cdot)$ s.t. $\sum_j R(x_j,\cdot) = R(\sum_j x_j, \cdot)$.
    \item[P2.] Let the remaining contribution be $C_{j-i^\prime} = T_j - r_j,~\forall j \in P$.
\end{itemize}

Using P1 and P2, we now show that any $(z^*, x^*) \in OPT_{MIP-CC-D}$ implies that $z^* \in OPT_{MIP}$.

\begin{Claim}
For the specific setting defined by P1-P2 and given $(Z^F, X^F)$ and $Z^G$ as the tuple of feasible solutions to MIP-CC-D and MIP in Eq.~\ref{eqn::newMIP}, respectively, we have $Z^G \subseteq Z^F$.
\end{Claim}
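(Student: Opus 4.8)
The plan is to prove the containment directly, by showing that every $z$ feasible for MIP (Eq.~\ref{eqn::newMIP}) can be \emph{extended} to a feasible tuple $(z,x)$ of MIP-CC-D (Eq.~\ref{eq::opt_mip}), so that $z\in Z^F$. First I would fix an arbitrary $z\in Z^G$, so that $z_j\in\{0,1\}$ for every $j$ and $\sum_{j\in P} z_j r_j\le\gamma$. The key simplification is condition P2: since $C_{j-i^\prime}=T_j-r_j$, the remaining-contribution bound becomes $T_j-C_{j-i^\prime}=r_j$ for every project $j$, so all of the MIP-CC-D constraints can be rewritten purely in terms of the $r_j$.

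Next I would define the candidate contribution vector by $x_j=r_j$ whenever $z_j=1$ and $x_j=0$ whenever $z_j=0$; choosing the discretization unit $\delta$ so that each $r_j$ is a multiple of $\delta$ guarantees $x_j\in D$. It then remains to verify the MIP-CC-D constraints on $(z,x)$. The budget constraint is where the reduction happens: $\sum_{j\in P} x_j=\sum_{j:z_j=1} r_j=\sum_{j\in P} z_j r_j\le\gamma$, which is exactly the MIP budget constraint and hence holds by feasibility of $z$. The remaining-contribution constraint $x_j\le T_j-C_{j-i^\prime}=r_j$ holds since $x_j\in\{0,r_j\}$. For the two indicator constraints, rewriting via P2 gives $(x_j-r_j)z_j\ge 0$ and $x_j-r_j<z_j$; when $z_j=1$ we have $x_j-r_j=0$, so these read $0\ge 0$ and $0<1$, and when $z_j=0$ they read $0\ge 0$ and $-r_j<0$.

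The main (and essentially only) delicate point is the strict inequality $x_j-r_j<z_j$ in the case $z_j=0$: it requires $-r_j<0$, i.e.\ $r_j>0$. I would dispatch this by noting that in the KNAPSACK-based construction of Part A the item weights, and hence each $r_j=w_j$, are strictly positive, so $r_j>0$ holds throughout; equivalently, a project with $r_j=0$ is already funded by the agents in $N\setminus\{i^\prime\}$ and may be discarded. With $r_j>0$ all constraints are met, so $(z,x)$ is MIP-CC-D-feasible, giving $z\in Z^F$; since $z\in Z^G$ was arbitrary, $Z^G\subseteq Z^F$. I would close by remarking that P1 (the additivity $\sum_j R(x_j,\cdot)=R(\sum_j x_j,\cdot)$) plays no role in this claim, as it concerns only the objectives $F$ and $G$; the containment of feasible sets rests entirely on P2.
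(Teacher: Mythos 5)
Your proof is correct and follows essentially the same route as the paper's: extend each $z\in Z^G$ to $(z,x)$ by setting $x_j=r_j$ when $z_j=1$ and $x_j=0$ when $z_j=0$, then verify the MIP-CC-D constraints using P2. You are in fact slightly more careful than the paper, which silently assumes $r_j>0$ (needed for the strict inequality when $z_j=0$) and does not address discretization of the $r_j$; these are worthwhile observations but do not change the argument.
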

\begin{proof}
For the proof, we show that given any solution $z\in Z^G$, we can also construct a solution $(z,x)\in (Z^F, X^F)$. That is, given $z$, we can construct $x$ as follows. Let $x_j=r_j$ if $z_j=1$ and $x_j=0$ if $z_j=0$. 

We now have to show that $x$'s construction does not break feasibility of a solution in MIP-CC-D. Observe that, from Eq.~\ref{eq::opt_mip},

\begin{itemize}
    \item Trivially, we have $\sum_j x_j = \sum_j z_jr_j \leq \gamma$ and $x_j \leq r_j = T_j - C_{j-i^\prime}$. That is, the budget constraint and project's funding conditions are satisfied.
    \item If $z_j=0$, then both $(x_j-T_j+C_{j-i^\prime})\cdot z_j \geq 0$ and $x_j-T_j+C_{j-i^\prime}<z_j$ hold.
    \item Likewise, if $z_j=1$ then both $(x_j-T_j+C_{j-i^\prime})\cdot z_j \geq 0$ and $x_j-T_j+C_{j-i^\prime}<z_j$ hold.
\end{itemize}

 Hence for every $z\in Z^G$ we can construct $(z, x)$ s.t., $(z, {x}) \in (Z^F, X^F)$. Hence every solution that is feasible for MIP is also feasible for MIP-CC-D.
\end{proof}



With this, consider the following claim.

\begin{Claim}
\label{claim::partb}

For the specific setting defined by P1-P2 and any fixed $z\in Z^F$, let $\hat{x}_j^z = r_j$ where $z_j = 1$ and $r_j = T_j - C_{j-i^\prime}$ and $\sum_j \hat{x}^z = \gamma$. Then  $\exists \hat{x}^z\in X^F$, such that $F(z, \hat{x}^z) \geq F(z, x)$, $\forall x\in X^F$. 
\end{Claim}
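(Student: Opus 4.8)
The plan is to fix an arbitrary feasible $z \in Z^F$ and maximize $F(z,\cdot)$ over the contribution vectors $x$ compatible with this $z$, exploiting the fact that the MIP-CC-D constraints rigidly split the projects into a ``funded'' group ($z_j=1$) and an ``unfunded'' group ($z_j=0$). First I would note that for every $j$ with $z_j=1$, the remaining-contribution constraint $x_j \leq T_j - C_{j-i^\prime} = r_j$ together with the indicator constraint $(x_j - T_j + C_{j-i^\prime})z_j \geq 0$ forces $x_j = r_j$ exactly. Hence the ``funded'' part of the objective, $\sum_{j:z_j=1}(\theta_j - x_j)$, collapses to the constant $\sum_{j:z_j=1}(\theta_j - r_j)$, independent of the particular compatible $x$.

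Next I would treat the ``unfunded'' part $\sum_{j:z_j=0} R(x_j,\cdot)$. Using property P1 (together with $R(0)=0$, which lets P1 apply to the sub-collection of unfunded projects), this sum equals $R\left(\sum_{j:z_j=0} x_j,\cdot\right)$; that is, only the \emph{total} unfunded contribution $S := \sum_{j:z_j=0} x_j$ matters, not how it is split across projects. Since $R$ satisfies CM (Condition~\ref{cond1}) and is therefore strictly increasing, maximizing $F(z,\cdot)$ reduces to maximizing $S$. The budget constraint $\sum_j x_j \leq \gamma$ combined with the forced values $x_j = r_j$ on funded projects bounds $S \leq \gamma - \sum_{j:z_j=1} r_j =: S^\star$.

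Finally I would exhibit the maximizer $\hat{x}^z$ attaining $S = S^\star$: set $\hat{x}^z_j = r_j$ for $z_j = 1$ and spread the leftover $S^\star$ over the unfunded projects so that $\sum_j \hat{x}^z_j = \gamma$. This gives $F(z,\hat{x}^z) = \sum_{j:z_j=1}(\theta_j - r_j) + R(S^\star,\cdot)$, which by the two preceding paragraphs upper-bounds $F(z,x)$ for every compatible $x$, yielding $F(z,\hat{x}^z) \geq F(z,x)$. The main obstacle I anticipate is verifying that $\hat{x}^z$ is genuinely \emph{feasible} in MIP-CC-D: the leftover $S^\star$ must be distributed over the unfunded projects while keeping each $\hat{x}^z_j$ strictly below $r_j$ (so that $\hat{x}^z_j - T_j + C_{j-i^\prime} < z_j = 0$ holds) and on the discrete grid $D$. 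I would argue this is always achievable because feasibility of $z$ guarantees enough ``room'' among the unfunded projects, and the leftover can be partitioned into sufficiently fine discrete pieces; crucially, since P1 renders the objective insensitive to the exact partition, any admissible split realizes the value $R(S^\star,\cdot)$ and hence the claimed optimality.
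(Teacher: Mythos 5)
Your proof is correct and takes essentially the same route as the paper's: force $x_j = r_j$ on every funded project (the paper states this as ``$x_j < r_j \implies z_j = 0$''), collapse the unfunded refunds via P1 (with $R(0)=0$) into $R$ of the total unfunded contribution, then use CM plus the budget constraint to bound that total by $\gamma - \sum_{j : z_j = 1} r_j$, which $\hat{x}^z$ attains. The feasibility obstacle you flag---spreading the leftover over unfunded projects while keeping each coordinate strictly below $r_j$ and on the grid $D$---is genuine, but the paper treats it no more rigorously, simply asserting $\hat{x}^z_j < r_j$ for $z_j = 0$ by construction, so your argument matches the paper's level of rigor while being more explicit about where the delicacy lies.
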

\begin{proof}
 Given a fixed $(z, x)\in (Z^F, X^F)$, observe that,
\begin{align*}
  F(z, x) & = \sum_{j\in P} z_{j} \cdot(\theta_{j} - x_{j})  +  \sum_{j\in P} (1 - z_{j}) \cdot R(x_{j},\cdot) \\
  &  = \sum_{j\in P}  z_j (\theta_{j} - x_{j}) + R(\sum_{j\in P} (1 - z_j) x_j,\cdot) \qquad \left(\text{As}\ \sum_k R(x_k,\cdot) = R(\sum_{k} x_k, \cdot)\right)\\
  & = \sum_{j\in P}  z_j (\theta_{j} - x_{j}) + R(\sum_{j\in P} x_j -\sum_{j\in P} z_j x_j,\cdot)  \\
   & \leq \sum_{j\in P}  z_j (\theta_{j} - x_{j}) + R(\gamma -\sum_{j\in P} z_j x_j,\cdot) \qquad  \left(\text{As}\  \sum_j x_j \leq \gamma \ \text{and}\ R \ \text{satisfies CM}\right) \\
    & \leq \sum_{j\in P}  z_j (\theta_{j} - r_{j}) + R(\gamma -\sum_j z_j x_j,\cdot) \qquad  \left(\text{From~Eq.~}\ref{eq::opt_mip},\  x_j < r_j \implies z_j = 0\right) \\
    & = \sum_{j\in P}  z_j (\theta_{j} - r_{j}) + R(\gamma -\sum_{j\in P} z_j r_j,\cdot) = F(z, \hat{x}^z).  \\
\end{align*}
Hence,  $F(z, \hat{x}^z) =  \sum_{j\in P}  z_j (\theta_{j} - r_{j}) + R(\gamma -\sum_j z_j r_j,\cdot) \ge F(z, x) $ where $\hat{x}_j^z = r_j$ if $z_j = 1$ and $\sum_j \hat{x}_j^z = \gamma$.

To conclude the proof, we now show that the contribution set is also feasible. To this end, observe that from the claim statement we have $\sum_j \hat{x}^z = \gamma$. Further, since for each $z_j=1$, we have by construction $\hat{x}^z_j=r_j\leq T_j-C_{j-i^\prime}$ and for each $z_j=0$ we have $\hat{x}^z_j<r_k$, the second feasibility condition is also met. 
\end{proof}

We can trivially observe that $G(z) = F(z, \hat{x}^z)$ where $\hat{x}_j^z = r_j$ if $z_j = 1$ and $\sum_j \hat{x}_j^z = \gamma$. Let $(z^*, x^*) \in OPT_{MIP-CC-D}$, then $F(z^*, x^*) \geq F(z^*, \hat{x}^{z^*})$ and from Claim~\ref{claim::partb}, $F(z^*, \hat{x}^{z^*}) \geq F(z^*, x^*)$, hence $x^* = \hat{x}^{z^*}$. That is, $$G(z^*) = F(z^*, x^*) = \max_{z \in Z^F} F(z, \hat{x}^{z^*}) = \max_{z\in Z^F} G(z).$$

Given that (i) the feasibility set of MIP is a subset of the feasibility set of MIP-CC-D and (ii) $z^*$ is also feasible to MIP since $\sum_j x_j^* \leq \gamma$ and $x_j^* \leq r_j, \ \forall j$, we have $\sum_j z_j^* r_j \leq \gamma$.
Hence $z^*$ is also optimal to MIP.

In summary, Part B shows that MIP in Eq.~\ref{eqn::newMIP} reduces to MIP-CC-D. That is, any solution to MIP-CC-D can be used to determine a solution to MIP in Eq.~\ref{eqn::newMIP}. Hence, MIP-CC-D is also NP-Hard.

\subsection{Proof of Corollary 2\label{app:cor}}
\begin{Corollary}
Given an instance of $\mathcal{M}_{CC}$ with discrete contributions and for any set of $(R_j)_{j\in P}$ satisfying Condition \ref{cond1}, if all agents except $i^\prime$, $N\setminus \{i^\prime\}$, follow a specific strategy that funds $P^\star\subset P$, then computing the optimal deviation for agent $i^\prime$ is NP-Hard.
\end{Corollary}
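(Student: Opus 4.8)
The plan is to derive the corollary from Theorem~\ref{thm::nphard} by \emph{embedding} the hard instances constructed there into larger instances in which the agents $N\setminus\{i^\prime\}$ genuinely fund a welfare-optimal subset $P^\star$, while the residual best-response problem facing $i^\prime$ remains the KNAPSACK-encoding optimization of Theorem~\ref{thm::nphard}. The conceptual point is that ``computing the optimal deviation for $i^\prime$'' is exactly solving MIP-CC-D for $i^\prime$ against the contribution vector $C_{j-i^\prime}$ \emph{induced} by the others' $P^\star$-funding strategy; hence it suffices to exhibit such a strategy profile whose induced residual problem is NP-Hard.

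First I would recall the reduction behind Theorem~\ref{thm::nphard}: given a KNAPSACK instance with weights $w_j$, values $s_j$, capacity $B$, and a refund scheme satisfying $\sum_j R(x_j,\cdot)=R(\sum_j x_j,\cdot)$, one sets the residual amount on each gadget project to $C_{j-i^\prime}=T_j-r_j$ with $r_j=w_j$, agent $i^\prime$'s budget to $\gamma=B$, and valuations so that $\theta_j-r_j-R(r_j)=s_j$. Agent $i^\prime$'s best response over these projects then maximizes $\sum_j z_j(\theta_j-r_j-R(r_j))$ subject to $\sum_j z_j r_j \le \gamma$, i.e.\ precisely KNAPSACK. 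Call this block of projects $P_K$.

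Next I would augment the instance with a disjoint block $P^\star$ that the agents in $N\setminus\{i^\prime\}$ can and do fund among themselves: assign each $j\in P^\star$ a large total valuation $\vartheta_j$ and a modest target $T_j$, and distribute budget among $N\setminus\{i^\prime\}$ so that they contribute $C_{j-i^\prime}=T_j$ for every $j\in P^\star$ (fully funding it) while contributing $C_{j-i^\prime}=T_j-r_j$ on each $j\in P_K$. By choosing the $\vartheta_j$ on $P^\star$ large enough relative to the $P_K$ welfare contributions, and keeping the overall budget below $\sum_{j\in P} T_j$ (Budget Deficit), $P^\star$ becomes the unique welfare-optimal subset, and by construction the others' strategy funds exactly $P^\star$. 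I would then analyze $i^\prime$'s best response: since every $j\in P^\star$ is already funded, $i^\prime$ strictly prefers to contribute $0$ there (any positive contribution only lowers its funded utility $\theta_{i^\prime j}-x_{i^\prime j}$, whereas it already collects $\theta_{i^\prime j}$ for free), so $i^\prime$'s entire budget $\gamma$ is free for $P_K$ and its decision over $P_K$ is exactly the KNAPSACK optimization above. Thus any polynomial-time algorithm for $i^\prime$'s optimal deviation would solve KNAPSACK, establishing NP-Hardness.

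The main obstacle I anticipate is the simultaneous consistency of the gadget: I must choose the valuations and targets so that (i) $P^\star$ is provably welfare-optimal against all competing subsets, including mixed subsets that drop a $P^\star$ project in favor of several $P_K$ projects; (ii) the agents $N\setminus\{i^\prime\}$ admit a feasible budget split that funds $P^\star$ exactly while leaving each $P_K$ project short by precisely $r_j$; and (iii) Budget Deficit and the additive refund property $\sum_j R(x_j,\cdot)=R(\sum_j x_j,\cdot)$ of Theorem~\ref{thm::nphard} are preserved. Making $\vartheta_j$ on $P^\star$ dominate while pinning $i^\prime$'s residual budget at $B$ is the delicate balancing act; once it is arranged, both the reduction and the best-response decomposition are routine.
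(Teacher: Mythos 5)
Your plan founders on item (i) of your own ``balancing act,'' and the failure is structural rather than a matter of tuning constants. Suppose, as you propose, that the agents $N\setminus\{i^\prime\}$ fully fund $P^\star$ \emph{by themselves}, and recall the paper's standing assumption that every project has positive welfare, $\vartheta_j>T_j$. Write $c_k\ge 0$ for the others' contribution to a gadget project $k\in P_K$, so its residual is $w_k=T_k-c_k$, and let $\gamma_{i^\prime}=B$ be the knapsack capacity. The total budget is then at least $\sum_{j\in P^\star}T_j+\sum_{k\in P_K}c_k+B$, so the superset $P^\star\cup\{k\}$ is affordable whenever $w_k\le\sum_{j\ne k}c_j+B$; and since $\vartheta_k>T_k$, any affordable superset has \emph{strictly larger} welfare than $P^\star$. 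Making the valuations on $P^\star$ huge cannot prevent this: the superset contains $P^\star$ and inherits all of its welfare. Hence for $P^\star$ to be welfare optimal you are forced to have $w_k>\sum_{j\ne k}c_j+B\ge B$ for \emph{every} gadget project, i.e.\ every unfunded project's residual exceeds $i^\prime$'s entire budget. But then $i^\prime$ can never complete any gadget project, all funding indicators outside $P^\star$ are forced to zero, and the ``optimal deviation'' degenerates into pure refund maximization, which is trivial for the additive scheme $\sum_j R(x_j,\cdot)=R(\sum_j x_j,\cdot)$ used in Theorem~\ref{thm::nphard}. In other words, under your reading of the hypothesis the embedded KNAPSACK is necessarily the trivial one, and no hardness can be extracted. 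The only escape inside your construction is to give gadget projects nonpositive welfare $\vartheta_k\le T_k$, which violates the sufficient-interest assumption imposed on every project — an issue your write-up never addresses.

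The paper sidesteps this trap by reading the hypothesis differently (its proof is correspondingly a one-liner): the ``specific strategy that funds $P^\star$'' is a profile that funds $P^\star$ only \emph{together with} $i^\prime$'s prescribed contribution, so the others' contributions leave residuals on $P^\star$ itself; the optimal deviation is then literally an instance of MIP-CC-D on those residuals, and hardness is quoted from Theorem~\ref{thm::nphard}. (Indeed the paper's proof explicitly starts from contributions such that $P^\star$ is \emph{not} yet funded.) Your embedding can be repaired along exactly these lines: let the others leave residuals on $P^\star$ summing to exactly $\gamma_{i^\prime}=B$, give the gadget projects zero slack ($c_k=0$, so $T_k=w_k$), and inflate $\vartheta_j$ on $P^\star$ through the other agents' valuations. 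Then the leftover budget after funding $P^\star$ is zero, so no superset is affordable and $P^\star$ is welfare optimal even though every gadget keeps $\vartheta_k>T_k$; meanwhile $i^\prime$'s best response is a knapsack over all residuals in which the $P^\star$ items can be given value zero to $i^\prime$ (choose $\theta_{i^\prime j}=w_j+R(w_j,\cdot)$), i.e.\ a faithful embedding of general KNAPSACK. As written, however, your proof has a genuine gap: the three consistency conditions you defer as ``routine'' are mutually inconsistent under your interpretation of the statement.
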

\begin{proof}
Let $N\setminus \{i^\prime\}$ agents follow a specific strategy and contribute to the projects s.t. $P^\star$ is not funded. For any such strategy, given $N\setminus \{i^\prime\}$ contributions, agent $i^\prime$'s optimal deviation is again given by MIP-CC-D. Hence NP-Hardness follows directly from Theorem~\ref{thm::nphard}.  
\end{proof}

\section{Illustration of Procedure~\ref{algo::F3} for PPR\label{app:illus}} 
To further clarify the impossibility presented in Theorem~\ref{thm:bf_bd}, we demonstrate Procedure \ref{algo::F3} when $(R_j)_{j\in P}$ is the PPR refund scheme (Eq.~\ref{eqn::ppr}).
As required, consider a setting with $P=\{1,2\}$ and $N=\{1,2\}$. Let $T_1=10$ and $B_1=1$. We choose $\theta_{11}=10.9$ which implies $\bar{x}_{11}=9.91$ using Theorem~\ref{thm:ppr}. That is,  $\bar{x}_{11}< T_1 < \theta_{11}$. Now, fix $\bar{x}_{21}=T_1-\bar{x}_{11}=0.99$. This also implies $T_2=0.99$.  The upper bound on the equilibrium contribution, $\bar{x}_{21}=0.99$, is possible for the value $\theta_{21}=1.089$. Select $\theta_{22}=1.9$, to get $\theta_{22}>\theta_{21}$. This also implies $P^\star = \{1\}$, i.e., $\theta_{11}+\theta_{21}-T_1=1.989$ and $\theta_{12}+\theta_{22}-T_2=0.91$. Despite this, agent 2 will deviate since $\theta_{22}-T_2>\theta_{21}-\bar{x}_{21}$.

\section{Exponential Distribution\label{app::exp}}



\begin{figure}[t]
\centering
\includegraphics[width=\columnwidth]{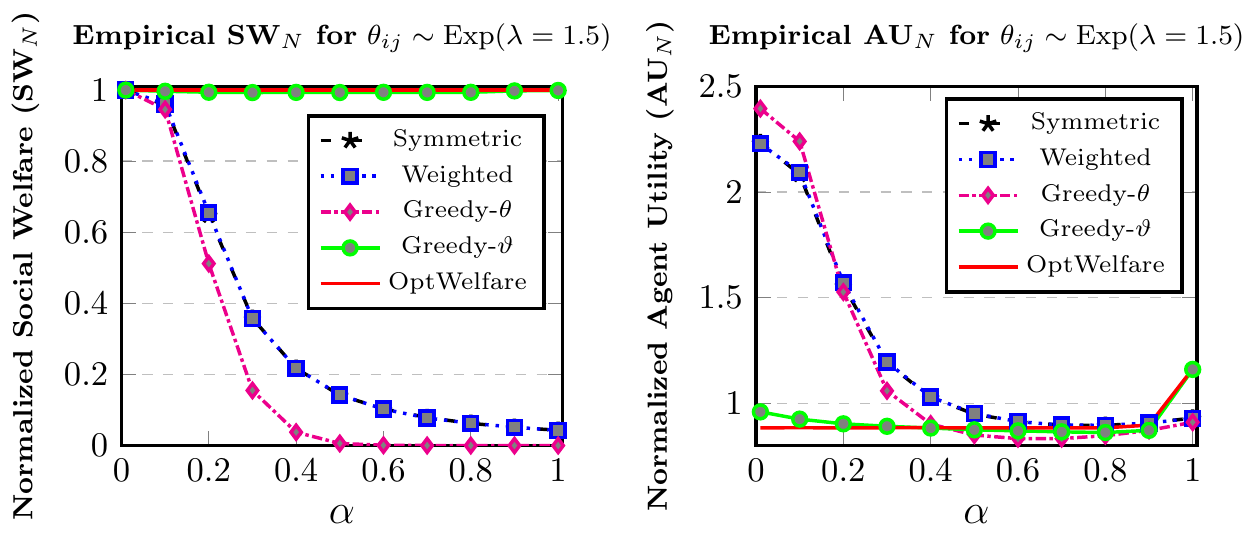}
\caption{Empirical SW$_N$ and AU$_N$ for $\theta_{ij} \sim \mbox{Exp}(\lambda=1.5)$\label{fig:nsw-exp}}
\end{figure}

We now provide the results when agents sample their valuations using the Exponential distribution, i.e., for each agent $i$ and for the project $j$, let $\theta_{ij} \sim \mathbf{Exp}(\lambda=1.5)$. Here, $\lambda$ is the rate parameter. Figure~\ref{fig:nsw-exp} depicts the empirical Normalized Social Welfare (SW$_N$) and Normalized Agent Utility (AU$_N$).

Despite the difference in sampling $\theta$s, we observe that the overall trend in AU$_N$ and SW$_N$ is the same as that reported for the Uniform case (in the main paper). That is, agent deviation is helpful only for smaller $\alpha$s. And that such a deviation results in only a marginal decrease in the overall welfare obtained.

 These results further suggest that agents may prefer the strategy Greedy-$\vartheta$ as the number of projects increases. One can also study the notion of $\epsilon$-Nash Equilibrium for these cases and provide theoretical bounds on the tolerance $\epsilon$ vis-a-vis the social welfare.

\end{document}